\def\Sch{L}
\def\Ric{\mbox{Ric}}
\def\Scal{\mbox{Scal}}
\def\Hess{\mbox{Hess}}
\def\grad{\mbox{grad}}
\def\gE{g_E}
\def\ovgE{\overline{g}_E}
\def\B{{\mathcal B}}
\def\covE{{\mathcal D}}
\newcommand{\scri}{\mathscr{I}}
\def\xizero{\widehat{\xi}}
\def\omegazero{\widehat{\omega}}
\def\psizero{\widehat{\Psi}}
\def\derpsizero{\widehat{\Psi}}
\def\ar{\mathrm{a}}
\def\br{\mathrm{b}}
\newcommand{\bm}[1]{\mbox{\boldmath $#1$}}
\newcommand{\ov}[1]{\overline{#1}}
\DeclareSymbolFont{symbols2}{LS1}{stixfrak} {m} {n}
\DeclareMathSymbol{\operp}{\mathbin}{symbols2}{"A8}
 	\definecolor{carblue}{rgb}{0.30, 0.50, 1}
\newcommand{\car}[1]{\textcolor{black}{#1}}
\newcommand{\lr}[1]{\left( #1 \right)}
\newcommand{\lrbrace}[1]{\left\lbrace #1 \right\rbrace}
\newcommand{\sign}{\mathrm{sign}}
\def\la{\langle}
\def\ra{\rangle}
\newcommand{\conf}[1]{\mathrm{Conf}\lr{#1}}
\newtheorem{theorem}{Theorem}[section]
\newtheorem{proposition}{Proposition}[section]
\newtheorem{lemma}[theorem]{Lemma}
\newtheorem{remark}{Remark}[section]
\newtheorem{definition}{Definition}[section]
\title{Covariant classification of conformal Killing vectors of locally conformally flat $n$-manifolds with an application to Kerr-de Sitter}
\author[1]{Marc Mars\thanks{\it marc@usal.es}}
\author[2]{Carlos Peón-Nieto\thanks{\it carlos.peon-nieto@matfyz.cuni.cz}}
\affil[1]{\it Instituto de F\'{\i}sica  Fundamental y Matem\'aticas, Universidad de Salamanca \\
Plaza de la Merced s/n 37008, Salamanca, Spain}
\affil[2]{\it Institute of Theoretical Physics, Faculty of Mathematics and Physics,
Charles University, V Hole\v{s}ovi\v{c}k\'ach 2, 180 00 Prague 8, Czech Republic}
\date{\today}
\begin{document}

\maketitle 

\begin{abstract}
% We obtain a coordinate independent, local embedding for every locally conformally flat $n$-manifold $(\Sigma,\gamma)$ of signature $(r,s)$ ($r + s = n$) into a flat vector $(n+2)$-space $(\mathbb{R}^{r+1,s+1},\eta)$ of signature $(r+1,s+1)$. 
We obtain a coordinate independent algorithm to determine the class of conformal Killing vectors of a locally conformally flat $n$-metric $\gamma$ of signature $(r,s)$ modulo conformal transformations of $\gamma$. This is done in terms of endomorphisms in the pseudo-orthogonal Lie algebra $\mathfrak{o}(r+1,s+1)$ up to conjugation of \car{the} group $O(r+1,s+1)$. The explicit classification is worked out in full for the Riemannian $\gamma$ case ($r = 0, s = n$). As an application of this result, we prove that the set of five dimensional, $(\Lambda>0)$-vacuum, algebraically special metrics with non-degenerate optical matrix, analyzed  in \cite{reall15} is in one-to-one correspondence with the metrics in the Kerr-de Sitter-like class. This class \cite{Kdslike,marspeonKSKdS21} exists in all dimensions and its defining properties involve only properties at $\scri$.  The equivalence between two seemingly unrelated classes of metrics  
points towards interesting connections between the algebraically special type of the bulk spacetime and the conformal geometry at null infinity.
\end{abstract}

\section{Introduction}

Conformal invariance plays a fundamental role in many physical theories that include critical phenomena, conformal field theories or electromagnetism, among many others. Conformal Killing vectors (CKV) in \car{(a locally conformally)} flat space, being the infinitesimal generators of (local) conformal transformations,
are therefore also of great importance.  The conformal group induces a natural equivalence relation between CKVs in flat space. Two such CKVs are said to be equivalent if there is a (local) conformal transformation that maps one to another.
Conformal invariance in a theory means that the relevant object is
the conformal class of CKVs instead of  individual CKV representatives in the class. 

\car{As summarized in more detail below \cite{IntroCFTschBook} (also \cite{Kdslike,marspeon20,marspeon21}), the classification of conformal classes of CKVs in a flat $n$-dimensional space (of signature $(r,s)$) is equivalent to the algebraic classification of
skew-symmetric endomorphisms in an $(n+2)$-dimensional flat space of signature $(r+1,s+1)$ (i.e.  elements of the Lie algebra $\mathfrak{o}(r+1,s+1)$)
up to pseudo-orthogonal $O(r+1,s+1)$ transformations. The latter classification is worked out in detail in \cite{marspeon21} by means of only elementary linear algebra methods, although it is worth to remark that, in a more algebraic language, this is a particular case of classification of semisimple adjoint orbits, which is a well-known problem in Lie theory for which there actually exists a general framework (e.g. \cite{adjointorbs,knapp}). }

\car{For any classification result of the adjoint orbits of $\mathfrak{o}(r+1,s+1)$ to become of practical use to our case at hand, one first needs to find a map  between the algebras of CKVs and $\mathfrak{o}(r+1,s+1)$. Such map is easily constructed in Cartesian coordinates  \cite{IntroCFTschBook} (also \cite{Kdslike,marspeon20,marspeon21}), but a general coordinate independent approach appears to be missing in the literature. This is specially relevant in physical contexts where general covariance is a key ingredient (e.g.
general relativity), as it is often the case that the quantities are expressed in coordinate systems that are convenient for the problem at hand, and hence a priori unrelated to any Cartesian description. 
The main objective of this paper is to
provide a simple, algorithmic and coordinate independent 
classification scheme for conformal classes of CKVs in locally conformally flat manifolds of arbitrary signature. Our main result is given in
Theorem \ref{theoendocords}.}

As already said, this result can be of interest in any physical problem
(in a locally conformally flat space) where conformal invariance and diffeormorphism invariance play a crucial role.  An example of paramount importance is
the study of the asymptotic properties of spacetimes. A precise definition of the asymptotic region of a spacetime $(\widetilde M, \widetilde g)$
can be given in terms of conformal scalings, as long as the metric satisfies the property of being {\it conformally extendable}. We say that a metric $\widetilde g$ is conformally extendable if there exists a metric $g = \Omega^2 \widetilde g$ for a (sufficiently) smooth positive function $\Omega$ on $\widetilde M$, such that $g$ admits a (sufficiently) smooth extension
to $M := \widetilde M \cup \partial \widetilde M$, where $\partial \widetilde M = \{ \Omega = 0 \}$ is usually called {\it null infinity} and denoted $\scri$. A milestone result in general relativity by H. Friedrich \cite{friedrich81,friedrich81bis} proves, using the conformal properties of the spacetime, that there exists a well-posed Cauchy problem with data on $\scri$ for the $(\Lambda>0)$-vacuum case in four dimensions \cite{Fried86geodcomp}. Similar results also hold in higher dimensions \cite{Anderson2005,andersonchrusciel05,Kaminski21,kichenassamy03}, based on a different formalism. More details on these asymptotic Cauchy problems are given in Section \ref{secKdSlike}. 
Just like in the ordinary initial value problem of the Einstein equations \cite{BeigChruscielKID97}, the presence of Killing vectors (KV) in the spacetime constraint the initial data of the asymptotic Cauchy problem with  $\Lambda>0$ \cite{KIDPaetz,marspeondata21}. For a KV $X$ of $\widetilde g$, it is a general fact that $X$ extends to $\scri$ as a tangent vector $\xi$, which is necessarily a conformal Killing vector of the induced metric $\gamma$ at $\scri$. In four spacetime dimensions, a Killing initial data (KID) equation was obtained in \cite{KIDPaetz} for the asymptotic Cauchy problem of the $(\Lambda>0)$-vacuum equations. This gives a necessary and sufficient condition that $\xi$ and the initial data must satisfy in order for the spacetime metric to admit a KV $X$, which coincides with $\xi$ at $\scri$. The KID equation was extended to higher dimensions in \cite{marspeondata21} where it was also  proved that, when the data are restricted to be analytic, it gives necessary and sufficient conditions for $\xi$ to extend to a spacetime KV.

The fact that the asymptotic Cauchy problem for the $(\Lambda>0)$-vacuum equations is formulated in terms of conformal metrics, leaves a conformal gauge freedom in the initial data (cf. Section \ref{secKdSlike}).  Moreover, as a covariant theory, the data are also equivalent under diffeomorphisms of $\scri$. This has the interesting consequence that the set of  conformal diffeomorphisms (or conformal isometries) of $\scri$ is a symmetry at $\scri$ in the sense above. More specifically, the conformal group  $\conf{\scri}$ is the
set of diffeomorphisms $\varphi: \scri \longrightarrow \scri$ satisfying $\varphi^\star(\gamma) = \omega ^2 \gamma$ for some smooth positive function $\omega$ of $\scri$. We show in Section \ref{secKdSlike} that if $\xi$ satisfies the KID equation for certain data, then the vector fields of the form $\varphi_\star(\xi)$ for each $\varphi \in \conf{\scri}$ also satisfy the KID equation
and correspond to the same symmetry of $\widetilde g$.  Thus, the symmetries of $\widetilde g$ are in correspondence with the conformal classes of CKVs $[\xi]$ instead of with specific representatives in the class. As a consequence, whenever the geometry
at $\scri$ is locally conformally flat, the algorithmic classification result achieved in Theorem \ref{theoendocords} is of direct applicability.

In the second part of the paper we apply this theorem to establish the equivalence between two a priori unrelated families of
$(\Lambda>0)$-vacuum solutions of the Einstein field equations in five dimensions. The first class \car{consists of} the
algebraically special spacetimes with non-degenerate optical matrix, classified in \cite{reall15}. The second is the class
of so-called Kerr-de Sitter-like class of metrics with conformally flat $\scri$. This class is a natural generalization of Kerr-de Sitter, first defined in four dimensions in \cite{KdSnullinfty,Kdslike} and later generalized to arbitrary dimensions in
\cite{marspeonKSKdS21}
in the locally conformally flat $\scri$ case. The characterizing property of this class is that
its asymptotic data is constructed canonically from a CKV at $\scri$. In fact, there is exactly one Kerr-de Sitter-like metric associated to each conformal class of CKVs of the metric at $\scri$. Therefore, the space of conformal classes $[\xi]$ provides a good representation of the moduli space of metrics in the class. The explicit form of the metrics is the Kerr-de Sitter class in all dimensions was obtained in  \cite{marspeonKSKdS21} via a rather unexpected equivalence (in all dimensions) between this class and the family of general $(\Lambda>0)$-vacuum spacetimes of Kerr-Schild type and satisfying a natural fall-off condition at infinity. The fact that this family (in five dimensions) is also equivalent to the family obtained in
\cite{reall15} indicates that there might be interesting and unexpected connections for $(\Lambda>0)$-vacuum spacetimes (in arbitrary dimension higher than four) between being algebraically special and being conformally extendable and having very special properties
at null infinity. We emphasize that the family studied in \cite{reall15} made {\it no} a priori assumption on the asymptotic properties of the spacetime.

Moreover, it is worth to emphasize that the metrics in \cite{reall15} were heuristically found to be either Kerr-de Sitter or a limit thereof. 
This same fact holds for the Kerr-de Sitter-like class \cite{marspeonKSKdS21}, but it can be seen as a natural consequence of the topological structure of the space of conformal classes of CKVs $\xi$, as well as the well-posedness of the asymptotic Cauchy problem. This perspective strengthens the uniqueness result of Kerr-de Sitter as understood in \cite{reall15}, in the sense that it proves that no further limits can be obtained from Kerr-de Sitter without, at least, substantially modifying the asymptotic properties.

The plan of the paper is as follows. In Section \ref{seccoords} we prove our main result, Theorem \ref{theoendocords}, which provides a method for a coordinate independent classification of conformal classes of CKVs of any locally conformally flat $n$-manifold $(\Sigma,\gamma)$ of signature $(r,s)$. We do this in terms of the (simpler) algebraic classification of skew-symmetric endomorphisms of an $(n+2)$-dimensional flat manifold $(\mathbb{R}^{r+1,s+1},\eta)$ up to isometries of $\eta$, where $\eta$ is a flat metric of signature $(r+1,s+1)$. The latter amounts to the classification of the Lie algebra $\mathfrak{o}(r+1,s+1)$ up to adjoint action of the Lie group $O(r+1,s+1)$, i.e. the equivalence classes $[F] = \{ F' \in \mathfrak{o}(r+1,s+1) \mid F' = \Lambda \cdot F \cdot \Lambda^{-1}, ~ \forall \Lambda \in  O(r+1,s+1) \} $, where the dot stands for the usual matrix multiplication. A key result for Theorem \ref{theoendocords} is (cf. Proposition \ref{propxiF}) that we find a way to assign an element $F \in \mathfrak{o}(r+1,s+1)$ to any CKV $\xi$ of $\gamma$ based solely on pointwise properties of $\xi$ (and its derivatives). In addition, for later use we give in subsection \ref{secriem} the explicit definition of a set of quantities which uniquely determines a class $[F]$ in the Riemannian $\gamma$ case (i.e. $r=0, s=n$).

In Section \ref{secKdSlike} we first review (in subsection \ref{seccauchy}) some general facts on the asymptotic Cauchy problem of the $(\Lambda>0)$-vacuum field equations in all dimensions. We then establish the equivalence of a class of CKVs at $\scri$  satisfying the KID equation with a unique KV of the physical spacetime $\widetilde g$. This result had already been proven for asymptotic data in the Kerr-de Sitter-like class in \cite{Kdslike,marspeondata21}, and here we show it holds in general. 
Section \ref{secKdSlike} is concluded with subsection \ref{subsecKdS}, where we revisit the definition of the Kerr-de Sitter-like class of spacetimes in all dimensions and their main properties.

In Section \ref{secspecial} we apply Theorem \ref{theoendocords} to establish the equivalence between the set of five dimensional algebraically special spacetimes with non-degenerate optical matrix, classified in \cite{reall15}, with the Kerr-de Sitter-like class of spacetimes. We start by calculating the asymptotic initial data of the metrics in \cite{reall15}. This easily proves that such spacetimes are  contained in the Kerr-de Sitter-like class. The non-trivial part is to verify that the metrics given in \cite{reall15} exhaust the whole space of conformal classes of locally conformally flat four dimensional Riemannian metrics. The proof relies strongly on the results of Section \ref{seccoords}, because the coordinates in which the metrics in \cite{reall15} are given are adapted to spacetime null congruences, and have therefore nothing to do with (conformally) Cartesian coordinates at infinity. In fact, it is hard 
 to find an explicitly flat representative of the metric at $\scri$.

We finish this paper with some observations in Section \ref{secdiscus}. We emphasize that the application
in Section \ref{secspecial} goes beyond simply  providing an explicit example where our main theorem can be applied. The application is useful to  gain insight in the classification higher dimensional algebraically special spacetimes and point out several possible  future results.

 \section{Covariant classification of CKVs of locally conformally flat metrics}\label{seccoords}

We start with a well-known result in conformal geometry, which we prove for completeness. Recall that the Schouten tensor of a metric $g$ of dimension $n \geq 3$
is defined as
\begin{align*}
  \Sch_g = \frac{1}{n-2} \left ( \Ric_g - \frac{\Scal_g}{2(n-1)} g \right)
\end{align*}
and we denote the gradient, the Hessian and its trace (the rough Laplacian) by $\grad_g$, $\Hess_g$ and
$\Delta_g$ respectively. Scalar product with $g$ is denoted either by
$g(\cdot, \cdot)$ or $\la \cdot, \cdot\ra_g$.
\begin{lemma}
  \label{Schout}
    Let $(M,g)$ be a semi-riemannian manifold of dimension $n$ and $\Omega$ a smooth positive function \car{in $M$}. Define $\overline{g} = \frac{1}{\Omega^2} g$. Then  the respective Schouten tensors are related by
  \begin{align}
    \Sch_{\overline{g}} = \Sch_g + \frac{1}{\Omega} \Hess_g \Omega
    - \frac{1}{2 \Omega^2} | \grad_g \Omega|^2_{g} g \label{Schoutens}
  \end{align}
\end{lemma}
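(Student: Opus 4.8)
The plan is to reduce the statement to the well-known conformal transformation law for the Ricci and scalar curvatures and then assemble the Schouten tensor, in which the trace terms are tuned precisely so as to combine cleanly. It is convenient to introduce $f := -\log\Omega$ and write $\overline{g} = e^{2f} g$, since the logarithmic variable renders the difference of connections linear in $df$; I would recover the stated formula in terms of $\Omega$ only at the very end by substituting back.

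First I would compute the difference of the two Levi-Civita connections. Writing $\overline{\nabla} = \nabla + C$, metric compatibility and symmetry force the $(1,2)$-tensor $C$ to be
\begin{align*}
  C^c{}_{ab} = \delta^c_a\, \partial_b f + \delta^c_b\, \partial_a f - g_{ab}\, (\grad_g f)^c .
\end{align*}
Second, I would insert $C$ into the general identity $\overline{R}^d{}_{abc} = R^d{}_{abc} + \nabla_b C^d{}_{ac} - \nabla_c C^d{}_{ab} + C^d{}_{be}C^e{}_{ac} - C^d{}_{ce}C^e{}_{ab}$ and contract to obtain the Ricci law
\begin{align*}
  \Ric_{\overline{g}} = \Ric_g - (n-2)\left(\Hess_g f - df\otimes df\right) - \left(\Delta_g f + (n-2)\,|\grad_g f|^2_g\right) g ,
\end{align*}
and then take the $\overline{g}$-trace (using $\overline{g}=e^{2f}g$) to get $\Scal_{\overline{g}} = e^{-2f}\big(\Scal_g - 2(n-1)\Delta_g f - (n-1)(n-2)\,|\grad_g f|^2_g\big)$.

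Third, plugging both into the definition of $\Sch$, the crucial and only delicate point is that the factors $\tfrac{1}{n-2}$ and $\tfrac{1}{2(n-1)}$ are exactly those that make the $\Delta_g f$ terms cancel and reduce the $|\grad_g f|^2_g$ coefficient by half, leaving the clean relation
\begin{align*}
  \Sch_{\overline{g}} = \Sch_g - \Hess_g f + df\otimes df - \tfrac{1}{2}\,|\grad_g f|^2_g\, g .
\end{align*}
Finally, substituting $f=-\log\Omega$ gives $df = -\Omega^{-1} d\Omega$ and $\Hess_g f = -\Omega^{-1}\Hess_g\Omega + \Omega^{-2}\, d\Omega\otimes d\Omega$, so the two quadratic contributions $-\Hess_g f$ and $df\otimes df$ each produce a term $\pm\Omega^{-2}\, d\Omega\otimes d\Omega$ that cancel, while the surviving $\Omega^{-1}\Hess_g\Omega$ and $-\tfrac{1}{2}\Omega^{-2}|\grad_g\Omega|^2_g\, g$ reproduce exactly \eqref{Schoutens}. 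I expect no conceptual obstacle here: the entire difficulty is bookkeeping, namely fixing one curvature sign convention and keeping the trace terms organized so that the $\Delta_g f$ pieces drop out in the Schouten combination, with a final check that the quadratic-in-$\Omega$ terms cancel when passing from $f$ back to $\Omega$.
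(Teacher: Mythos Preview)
Your proposal is correct and follows essentially the same route as the paper: quote the conformal change of the Ricci tensor, trace to get the scalar curvature, and assemble the Schouten tensor so that the Laplacian terms cancel. The only difference is that the paper simply cites the Ricci transformation law directly in terms of $\Omega$ from a reference and says the rest follows at once, whereas you derive it from scratch via the logarithmic substitution $f=-\log\Omega$; this makes your argument more self-contained but is not a different method.
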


\begin{proof}
  The relationship between the Ricci tensors of $g$ and $\overline{g}$ is
  well-known be (e.g. \cite{kroonbook})
  \begin{align*}
    \Ric_{\overline{g}} = \Ric_g +  \frac{n-2}{\Omega} \Hess_g \Omega
    + \left ( \frac{1}{\Omega} \Delta_g \Omega -  \frac{n-1}{\Omega^2}
    |\grad_g \Omega|^2_g \right ) g.
  \end{align*}
  Taking trace with respect to $\overline{g}$ and inserting in the expression
  for $\Sch_{\overline{g}}$ the result follows at once.
\end{proof}
The following result relates the Hessians of scalar functions with respect to $g$
and with respect to $\overline{g}$. The Levi-Civita covariant derivatives of $g$, $\overline{g}$ are denoted $\nabla$, $\overline{\nabla}$
respectively.  Indices in objects constructed using geometric quantities
associated to $g$ or to $\overline{g}$
and raised and lowered with its corresponding metric. Capital Latin indices take values in $1, \cdots, n$.
\begin{lemma}
  Let $f$ be a smooth function on $M$ and $g$, $\overline{g}$ and
  $\Omega$ as before Then
  \begin{align*}
    Hess_{\overline{g}} \left ( \frac{f}{\Omega} \right )
    = \frac{1}{\Omega}  \left ( \Hess_g f
    +  f \left ( \Sch_g - \Sch_{\overline{g}} \right )
    + \left (
    \frac{f | \grad_g \Omega |^2_g}{2 \Omega^2} - \frac{1}{\Omega} \la \grad_g f,
    \grad_g \Omega \ra_g \right ) g \right )
      \end{align*}
\end{lemma}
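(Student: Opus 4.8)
The plan is to reduce everything to the standard behaviour of the Hessian under a conformal rescaling and then eliminate the Hessian of $\Omega$ using Lemma~\ref{Schout}. First I would record how the Hessian of an \emph{arbitrary} smooth function $h$ transforms under $\ov{g}=\Omega^{-2}g$. Writing $\Omega_A := \nabla_A\Omega$, the Christoffel symbols of the two connections differ by
\begin{align*}
  \ov{\Gamma}^C_{AB} = \Gamma^C_{AB} - \frac{1}{\Omega}\left( \delta^C_A\,\Omega_B + \delta^C_B\,\Omega_A - g_{AB}\,g^{CD}\Omega_D \right),
\end{align*}
which is the $\phi=-\log\Omega$ instance of the general formula $\ov{\Gamma}^C_{AB}=\Gamma^C_{AB}+\delta^C_A\partial_B\phi+\delta^C_B\partial_A\phi-g_{AB}g^{CD}\partial_D\phi$ valid for $\ov{g}=e^{2\phi}g$. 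Substituting into $(\Hess_{\ov{g}}h)_{AB}=\partial_A\partial_B h-\ov{\Gamma}^C_{AB}\partial_C h$ yields the index-free identity
\begin{align*}
  \Hess_{\ov{g}}h = \Hess_g h + \frac{1}{\Omega}\left( \dif h\otimes\dif\Omega + \dif\Omega\otimes\dif h - \la\grad_g h,\grad_g\Omega\ra_g\,g \right),
\end{align*}
which holds for every $h$.

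Second, I would specialise to $h=f/\Omega$. Its differential is $\dif h=\frac{1}{\Omega}\dif f-\frac{f}{\Omega^2}\dif\Omega$, and a direct application of the Leibniz rule gives
\begin{align*}
  \Hess_g\!\left(\frac{f}{\Omega}\right) = \frac{1}{\Omega}\Hess_g f - \frac{1}{\Omega^2}\left(\dif f\otimes\dif\Omega + \dif\Omega\otimes\dif f\right) - \frac{f}{\Omega^2}\Hess_g\Omega + \frac{2f}{\Omega^3}\,\dif\Omega\otimes\dif\Omega .
\end{align*}
The reason for rescaling by $\Omega$ before differentiating, rather than working directly with $f$, is that when this expression and the above $\dif h$ are inserted into the transformation identity, the symmetric cross terms $\dif f\otimes\dif\Omega+\dif\Omega\otimes\dif f$ cancel identically, and so do all of the $\dif\Omega\otimes\dif\Omega$ contributions. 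This is the only genuinely delicate point: it is pure bookkeeping of conformal weights and signs, and it is exactly where the sign in the Christoffel transformation above must be tracked correctly. After the cancellations only
\begin{align*}
  \Hess_{\ov{g}}\!\left(\frac{f}{\Omega}\right) = \frac{1}{\Omega}\Hess_g f - \frac{f}{\Omega^2}\Hess_g\Omega + \left( \frac{f\,|\grad_g\Omega|^2_g}{\Omega^3} - \frac{1}{\Omega^2}\la\grad_g f,\grad_g\Omega\ra_g \right)g
\end{align*}
survives.

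Finally, I would remove the remaining $\Hess_g\Omega$ by invoking Lemma~\ref{Schout}, rewritten as $\frac{1}{\Omega}\Hess_g\Omega = \Sch_{\ov{g}}-\Sch_g+\frac{1}{2\Omega^2}|\grad_g\Omega|^2_g\,g$. Substituting $-\frac{f}{\Omega^2}\Hess_g\Omega=\frac{f}{\Omega}(\Sch_g-\Sch_{\ov{g}})-\frac{f}{2\Omega^3}|\grad_g\Omega|^2_g\,g$ and combining the two $|\grad_g\Omega|^2_g\,g$ terms, whose coefficients $\frac{f}{\Omega^3}$ and $-\frac{f}{2\Omega^3}$ leave a net $\frac{f}{2\Omega^3}$, produces exactly the claimed formula after factoring out $\frac{1}{\Omega}$. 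I therefore expect no conceptual obstacle: the entire difficulty is organisational, namely keeping the conformal weights and signs consistent through the cancellation in the second step.
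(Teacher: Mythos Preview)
Your proof is correct and follows essentially the same approach as the paper. Both arguments compute the difference of Christoffel symbols (equivalently, the difference tensor $S^A{}_{CD}=\overline{\nabla}-\nabla$), apply it to $d(f/\Omega)$ to obtain an intermediate expression containing $\Hess_g\Omega$, and then eliminate that Hessian via Lemma~\ref{Schout}; the only difference is organizational, in that the paper applies the covector transformation directly to $s_B=\nabla_B(f/\Omega)$ and expands in one step, while you first record the general Hessian transformation and separately compute $\Hess_g(f/\Omega)$ before combining.
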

\begin{proof}
The difference tensor $\overline{\nabla} - \nabla$ is
\begin{align}\label{eqdifftens}
  S^{A}{}_{CD} = - \delta^{A}_{C} \frac{\nabla_{D} \Omega}{\Omega}
  - \delta^{A}_{D} \frac{\nabla_{C} \Omega}{\Omega}
  + \frac{\nabla^{A} \Omega}{\Omega} g_{CD}.
\end{align}
For a covector $s_{A}$ we therefore have
\begin{align*}
  \overline{\nabla}_{A} s_{B}
  = \nabla_{A} s_{B}
  + \frac{1}{\Omega} \left ( s_{A} \nabla_{B} \Omega
  + s_{B} \nabla_{A} \Omega - s_{C} \nabla^{C} \Omega  g_{AB}
  \right )  .
\end{align*}
Applying this  to $s_{B} = \overline{\nabla}_{B} (\Omega^{-1} f )
=  \nabla_{B} (\Omega^{-1} f )$ and expanding the products in the right hand side yields
\begin{align*}
  \overline{\nabla}_{A} \overline{\nabla}_{B}
  \left ( \frac{f}{\Omega} \right )
  = \frac{1}{\Omega} \left ( \nabla_{A} \nabla_{B} f
  + f \left ( - \frac{1}{\Omega} \nabla_{A}\nabla_{B} \Omega
  + \frac{1}{\Omega^2} \nabla^{C} \Omega \nabla_{C} \Omega g_{AB} \right )
    - \frac{1}{\Omega} \nabla^{C} f \nabla_{C} \Omega g_{AB} \right )
\end{align*}
Replacing the Hessian of $\Omega$ with equation \eqref{Schoutens} the result follows.
\end{proof}

Consider a metric $\gE$ of signature $(r,s) $which is locally flat on a manifold $M$ and let $\covE$
be the corresponding Levi-Civita derivative. Let $p \in M$ and
$U_p$ a neighbourhood of $p$ where $g_E$ is flat. Since the general solution of
equation $\covE_{A} \covE_{B} f=0$, $f(p)=0$  on $U_p$ is a linear combination of
Cartesian coordinates centered at $p$, there exist precisely
$n := r + s$ linearly independent functions $\{Y^{C}\}$ satisfying
\begin{align}
  \covE_{A} \covE_{B} Y^{C} =0, \qquad Y^{A}|_p =0. \label{Hesseq}
\end{align}
We do not restrict $\{ Y^{A} \}$ to be orthogonal. More specifically, let
\begin{align}
  h^{AB} := \covE_{C}  Y^{A} \covE^{C} Y^{B}  . \label{defh}
\end{align}
It is immediate from \eqref{Hesseq} that $h^{AB}$ are constant on $U_p$. $\{ Y^{A} \}$ being linearly independent and vanishing at $p$, it is immediate that they define a coordinate system on $U_p$. It follows that $h^{AB}$ is invertible and has signature $(r,s)$.  We let $h_{AB}$ be its inverse and introduce the functions
\begin{align}
  Y^0 :=1, \qquad  Y^{n+1} := \frac{1}{2} h_{AB} Y^{A} Y^{B}
  \label{defY0Yn}
\end{align}
The following lemma provides a number of properties of the set
$\{ Y^{\alpha} \} := \{Y^0, Y^{A}, Y^{n+1} \}$.
\begin{lemma}
\label{CKVs}
  With the setup and definitions above, the following properties hold in $U_p$:
  \begin{itemize}
    \item[(i)] The functions $\{ Y^{\alpha} \}$ are linearly independent and satisfy
    \begin{align}
      \Hess_{\gE} Y^{\alpha} = \delta^{\alpha}_{n+1}
      \gE. \label{HessYall}
    \end{align}
    Moreover, the matrix
    \begin{align*}
      Q^{\alpha\beta}:= \covE_{C} Y^{\alpha} \covE^{C} Y^{\beta}
      - \delta^{\alpha}_{n+1} Y^{\beta} -
      \delta^{\beta}_{n+1} Y^{\alpha}
    \end{align*}
    is constant on $U_p$, non-degenerate and of signature $(r+1,s+1)$.
         \item[(ii)] The general solution of
    \begin{align*}
      \Hess_{\gE} f  = c \gE, \qquad c \in \mathbb{R} ,
    \end{align*}
   is a linear combination $f = c_{\alpha} Y^{\alpha}$ with  $c_{n+1}  = c$
and $c_{\alpha}$ for $\alpha \neq n+1$ arbitrary.
  \item[(iii)] For each $\alpha, \beta$ the vector field
    \begin{align}\label{eqckvs1}
      \zeta^{\alpha\beta} := Y^{\alpha} \grad_{\gE} Y^{\beta}
        - Y^{\beta} \grad_{\gE} Y^{\alpha}, \qquad \alpha <
          \beta
    \end{align}
    is a conformal Killing vector of $\gE$ satisfying
    \begin{align*}
      \pounds_{\zeta^{\alpha\beta}} \gE = 2 \left (
      Y^{\alpha} \delta^{\beta}_{n+1} -
      Y^{\beta} \delta^{\alpha}_{n+1}      \right ) \gE .
    \end{align*}
  \item[(iv)] The set $\B := \{ \zeta^{\alpha\beta}, \alpha <
    \beta \}$ is linearly independent and spans the conformal Killing algebra of $(U_p, g_E)$.
  \end{itemize}
\end{lemma}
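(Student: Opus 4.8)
The plan is to derive everything from the single master identity \eqref{HessYall}, $\Hess_{\gE} Y^\alpha=\delta^\alpha_{n+1}\gE$, and then exploit it repeatedly. For (i) the cases $\alpha=0$ and $\alpha=A$ are immediate from $Y^0=1$ and from \eqref{Hesseq}, while for $\alpha=n+1$ I would insert \eqref{defY0Yn} into $\covE_A\covE_B Y^{n+1}$, use $\covE_A\covE_B Y^C=0$ to kill all second-derivative terms, and reduce to $\covE_A\covE_B Y^{n+1}=h_{CD}\covE_A Y^C\covE_B Y^D$; the key point is then that, since $\{Y^A\}$ is a coordinate system whose induced inverse metric is exactly $h^{AB}$ from \eqref{defh}, the Jacobian $\covE_A Y^C$ intertwines $h_{CD}$ with $\gE$, giving $h_{CD}\covE_A Y^C\covE_B Y^D=\gE$. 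Linear independence follows by applying $\Hess_{\gE}$ to a vanishing combination $c_\alpha Y^\alpha=0$ (forcing $c_{n+1}=0$, as only $Y^{n+1}$ has nonzero Hessian), then taking its gradient (forcing $c_A=0$ by independence of $\{\grad_{\gE}Y^A\}$), and finally $c_0=0$. For $Q^{\alpha\beta}$ I would show constancy by differentiating its definition and substituting \eqref{HessYall}: the $\delta^{\cdot}_{n+1}$ terms produced by the Hessians cancel exactly against the derivatives of the explicit $\delta^\alpha_{n+1}Y^\beta$ and $\delta^\beta_{n+1}Y^\alpha$ terms. Being constant, $Q$ is evaluated at $p$, where $Y^A|_p=0$ and $Y^{n+1}|_p=0$; using $\grad_{\gE}Y^{n+1}=h_{DE}Y^D\grad_{\gE}Y^E$ one gets the block form with $Q^{0,n+1}=-1$, $Q^{AB}=h^{AB}$ and all remaining blocks zero, from which nondegeneracy and signature are read off, the hyperbolic pair contributing one positive and one negative direction on top of the signature $(r,s)$ of $h^{AB}$.

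For (ii), given $\Hess_{\gE} f=c\,\gE$, the function $f-cY^{n+1}$ is Hessian-free by \eqref{HessYall}; since the Christoffel symbols of $\gE$ vanish in the coordinates $\{Y^A\}$ (the content of \eqref{Hesseq}), the equation $\covE_A\covE_B(f-cY^{n+1})=0$ reduces to $\partial_A\partial_B(f-cY^{n+1})=0$, whose solutions are affine, i.e. combinations of $Y^0=1$ and the $Y^A$; this gives $f=c_\alpha Y^\alpha$ with $c_{n+1}=c$ and the rest free. Part (iii) is a direct computation of $(\pounds_{\zeta^{\alpha\beta}}\gE)_{AB}=\covE_A\zeta_B+\covE_B\zeta_A$ for the covector $\zeta_A=Y^\alpha\covE_A Y^\beta-Y^\beta\covE_A Y^\alpha$ built from \eqref{eqckvs1}: differentiating and replacing every second derivative via \eqref{HessYall}, all products of first derivatives cancel by symmetry, leaving precisely $2(Y^\alpha\delta^\beta_{n+1}-Y^\beta\delta^\alpha_{n+1})\gE$, which is pure trace.

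Finally, for (iv) I would treat linear independence and spanning separately. Rewriting a vanishing combination as $\sum_{\alpha,\beta}c_{\alpha\beta}Y^\alpha\grad_{\gE}Y^\beta=0$ with $c_{\alpha\beta}$ antisymmetric and expanding in polynomial degree in the $Y^A$ (using $\grad_{\gE}Y^0=0$, independence of $\{\grad_{\gE}Y^A\}$, and that $Y^{n+1}$ is quadratic), the degree-$0$, $1$, $2$ parts successively force $c_{0A}=0$, then $c_{0,n+1}=0$ and $c_{AB}=0$ (a symmetric-versus-antisymmetric clash against $h_{AB}$), and then $c_{A,n+1}=0$ after a contraction with $h^{AB}$ producing the factor $(n-2)$. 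For spanning, given any CKV $\xi$ with $\pounds_\xi\gE=2\phi\gE$, I would first show $\Hess_{\gE}\phi=0$ from the flat-space integrability of the conformal Killing equation (differentiate $\covE_A\xi_B=\phi\gE_{AB}+F_{AB}$ twice, use flatness and trace, obtaining $(n-2)\Hess_{\gE}\phi=0$), so by (ii) $\phi=c_0+c_A Y^A$ is affine. By (iii) the combination $c_0\zeta^{0,n+1}+\sum_A c_A\zeta^{A,n+1}$ has exactly this factor, so $\xi$ minus it is Killing, hence spanned by $\zeta^{0A}=\grad_{\gE}Y^A$ and $\zeta^{AB}$, a flat Killing field being fixed by its value and its covariantly constant antisymmetric derivative at $p$. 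Equivalently, once independence is established, spanning follows from the dimension count, as $\B$ has $\binom{n+2}{2}=\tfrac{(n+1)(n+2)}{2}$ elements, the maximal dimension of the conformal algebra of a conformally flat $n$-manifold.

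I expect the main obstacle to be the spanning half of (iv): establishing $\Hess_{\gE}\phi=0$ for the conformal factor of an arbitrary CKV and organizing the reduction to the Killing case cleanly. The linear-independence bookkeeping, with the hypothesis $n\ge 3$ surfacing through the factor $(n-2)$, is the other delicate point.
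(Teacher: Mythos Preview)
Your proof is correct, and for items (i)--(iii) it matches the paper's argument essentially line for line: the Jacobian/intertwining computation for $\Hess_{\gE}Y^{n+1}=\gE$, the differentiation-and-cancellation for constancy of $Q^{\alpha\beta}$ followed by evaluation at $p$, the subtraction $f-cY^{n+1}$ for (ii), and the direct symmetrization for (iii) are all exactly what the paper does.

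Item (iv) is where you diverge. For linear independence, the paper does not expand in polynomial degree. Instead it first reads off the conformal factor from (iii): if $\zeta=\tfrac{1}{2}F_{\alpha\beta}\zeta^{\alpha\beta}=0$ then $\pounds_\zeta\gE=2(F_{0,n+1}Y^0+F_{A,n+1}Y^A)\gE=0$, which by independence of $\{Y^0,Y^A\}$ kills all $F_{\cdot,n+1}$ at once. The remaining combination $\zeta_D=F_{0B}\covE_D Y^B+F_{AB}Y^A\covE_D Y^B$ is then evaluated at $p$ (giving $F_{0B}=0$) and its covariant derivative at $p$ (giving $F_{AB}=0$). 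This avoids your degree-$2$ analysis entirely and never produces the $(n-2)$ factor; your approach works too, but the paper's is shorter and keeps the argument pointwise. For spanning, the paper uses only the dimension count $\binom{n+2}{2}=(n+1)(n+2)/2$, which you mention as an alternative; your constructive route through $\Hess_{\gE}\phi=0$ and reduction to the Killing case is valid but unnecessary once independence is in hand.
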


\begin{proof}
Firstly, it is trivial that $\Hess_{\gE} Y^0 =0$ and $\Hess_{\gE} Y^A =0$ holds. From definition \eqref{defY0Yn} and since $\Hess_{\gE} Y^{A} =0$ we get
  \begin{align*}
    \covE_{C}\covE_{D} Y^{n+1} = h_{AB} \covE_{C} Y^{A} \covE_{D} Y^{B} .
  \end{align*}
  Fix any point $q \in U_p$ and define the square matrix $A^{A}{}_{B} := \covE_{B} Y^{A} |_q$. Using matrix notation  $(A)^{A}{}_{B} = A^{A}_{B}$ where the upper index denotes row and the lower index column we may write \eqref{defh} as ($t$ is the transpose)
  \begin{align*}
    (h^{\sharp}) = (A)^t  (g^{\sharp}|_q) (A) 
  \end{align*}
  where $(g^{\sharp}_q)$ and $(h^{\sharp})$ are the symmetric matrices with coefficients
  $(g^{\sharp})^{AB} = \gE^{AB}|_q$ and
  $(h^{\sharp})^{AB} = h^{AB}$. Since $(h^{\sharp})$ is invertible, so it is $(A)$ and
  \begin{align*}
  (h) = (A)^{-1} (g|_q) ((A)^{-1})^t \qquad
  \Longleftrightarrow \qquad  (A) (h) (A)^t  = ( g|_q),
\end{align*}
where $(g|_q)$ is the matrix with components
$(g|_q) _{AB} = (\gE)_{AB}|_q$. In index notation
$h_{AB} \covE_{C} Y^{A} \covE_{D} Y^{B} = (\gE)_{CD}$  at all points in $U_p$. Thus $\Hess_{\gE} Y^{n+1} = \gE$ as claimed. The constancy of $Q^{\alpha\beta}$ follows from \eqref{HessYall} because 
\begin{align*}
  \covE_{D} Q^{\alpha\beta}
  = (\covE_{D} \covE_{C} Y^{\alpha}) \covE^{C} Y^{\beta}
  + \covE_{C} Y^{\alpha} \covE_{D} \covE^{C} Y^{\beta} 
  - \delta^{\alpha}_{n+1} \covE_{D}  Y^{\beta}
  - \delta^{\beta}_{n+1} \covE_{D}  Y^{\alpha}  =0.
  \end{align*}
  Evaluating at $p$ and using that $Y^{A}|_p = Y^{n+1}|_p =0$ as well as
  \eqref{defh} yields
  \begin{align*}
    Q^{\alpha\beta} 
= \left \{ \begin{array}{ll}
-1 & \mbox{if} \quad \alpha = 0,  \beta = n+1 \\
-1 & \mbox{if} \quad \alpha = n+1,  \beta = 0 \\
                                          h^{AB} & \mbox{if} \quad  \alpha= A, \beta = B \\
                                        \end{array}     \right .
\end{align*}
  and zero otherwise because
  \begin{equation}
    Q^{A\hspace{0.5mm} n+1} = \covE_C Y^A \covE^C Y^{n+1} - Y^A = \covE_C Y^A (h_{BD} Y^B \covE^C Y^D) - Y^A = h^{AD} h_{BD} Y^B - Y^A = 0  
  \end{equation}
  and the cases $\alpha = A, \beta = 0$ also vanish trivially. 
  Since $h^{AB}$ is of signature $(r,s)$ it follows at once that $Q^{\alpha\beta}$ is of signature $(r+1,s+1)$ (at $p$ and hence everywhere) and, in particular, non-degenerate. This proves item (i).

 For item (ii), let $f$ be a function satisfying $\Hess_g f = c  g_E$ with $c$ a constant. Define the constant $f_p := f(p)$ and the function
$f_0 := f -f_p Y^0 -c Y^{n+1}$. It is clear that $f_0(p)=0$ and  $\Hess_{\gE} f_0 =0$. Thus, $f_0$ is a linear combination of $\{Y^{A}\}$. Therefore $f= c_{\alpha} Y^{\alpha}$ with $c_0 = f_p$ and $c_{n+1} = c$. Note that the constant $c_0$ can be arbitrarily chosen since for any constant $c'$ the function $f + c'$ also solves $\Hess_g f = c  g_E$.

% which proves the claim.

For item (iii), the covector associated to $\zeta^{\alpha\beta}$ is
\begin{align}
  & \zeta^{\alpha\beta}_{D} = Y^{\alpha} \covE_{D} Y^{\beta}
  - Y^{\beta} \covE_{D} Y^{\alpha} 
  \\ \Longrightarrow \quad & 
  \covE_{C} \zeta^{\alpha\beta}_{D} =
  \covE_{C} Y^{\alpha} \covE_{D} Y^{\beta}
  + Y^{\alpha} \delta^{\beta}_{n+1} (\gE)_{CD}
  - \covE_{C} Y^{\beta} \covE_{D} Y^{\alpha}
  - Y^{\beta} \delta^{\alpha}_{n+1} (\gE)_{CD}
  \label{Dzeta}
\end{align}
so
\begin{align*}
  \covE_{C} \zeta^{\alpha\beta}_{D} +
  \covE_{D} \zeta^{\alpha\beta}_{C} =
   2 \left (  Y^{\alpha} \delta^{\beta}_{n+1} -
      Y^{\beta} \delta^{\alpha}_{n+1}      \right ) (\gE)_{CD}
\end{align*}
which establishes (iii). 

For the last item let $F_{\alpha\beta}$, for
  $\alpha <  \beta$, be any set of constants and define
  $F_{\beta\alpha} := - F_{\alpha\beta}$. The most general linear combination of elements in $\B$ is
    \begin{align*}
      \zeta := \sum_{\alpha < \beta}
      F_{\alpha\beta}
      \zeta^{\alpha\beta} =
      \frac{1}{2} F_{\alpha\beta}
      \zeta^{\alpha\beta}  = F_{\alpha\beta} Y^{\alpha} \grad_{\gE} Y^{\beta} .
      \end{align*}
      From item (iii) this vector satisfies
      \begin{align*}
        \pounds_{\zeta} \gE = 2 F_{\alpha\beta} Y^{\alpha} \delta^{\beta}_{n+1} \gE =
        2 F_{\alpha n+1} Y^{\alpha}  \gE = 2 (F_{0 n+1} Y^0 +
2 F_{A n+1} Y^{A} ) \gE.
      \end{align*}
      The functions $\{ Y^0, Y^A\}$ are linearly independent, so $\zeta=0$ implies $F_{\alpha n+1}=0$ and then $\zeta = 0$ reduces to
      \begin{align*}
        \zeta_{D} = F_{0 B} \covE_{D} Y^{B} +
        F_{AB} Y^{A} \covE_{D}  Y^{B} =0 .
      \end{align*}
      Evaluating at $p$ (where $Y^{A}$ vanishes) and using that 
      $(A)^{A}{}_{D} = \covE_{D} Y^{A} |_p$ is invertible, we find $F_{0 B}=0$.
      Finally, from \eqref{Dzeta}
      \begin{align*}
        \covE_{C} \zeta_{D} |_p = F_{AB} A^A{}_{C} A^{B}{}_{D} = 0
      \end{align*}
      from which $F_{AB}=0$ and the only vanishing
      linear combination in $\B$ is the zero vector. Finally, the number of independent constants $F_{\alpha\beta}$ equals to $\sum_{\beta = 1}^{n+1} \sum_{\alpha = 0}^{\beta-1} 1 = \sum_{\beta = 1}^{n+1} \beta = (n+1)(n+2)/2$, which is the dimension of the conformal Killing algebra of locally conformally flat $n$-metrics (e.g. \cite{IntroCFTschBook}). 
    \end{proof}
    Let now $g$ be a locally conformally flat metric and $\xi$ a
    conformal Killing vector of $g$. This means that at any point
    $p \in M$, there exists a neighbourhood $U_p$ of $p$ and a flat metric $g_E$
    on $U_p$ conformal to $g$. We restrict ourselves to $U_p$ in everything that follows
    and denote the covariant derivative w.r.t. $g$ as $\nabla$ and the covariant derivative with respect to $\gE$ as $\covE$.
    
    Let $\Omega: U_p \rightarrow \mathbb{R}$ be the smooth positive function satisfying $g_E = \Omega^{-2} g$.   By Lemma \ref{Schout} and $L_{\gE}=0$, this function
    satisfies the equation
        \begin{align}
          \Hess_g \Omega = \frac{1}{2 \Omega} | \grad_g \Omega|^2_g g - \Omega L_g
\label{HessOm}
        \end{align}
        We next show that we may assume that the function $\Omega$ satisfies, in addition, $\Omega_p = 1$ and $\nabla_{A} \Omega|_p =0$. The underlying reason is the freedom to conformally rescale a flat metric in such a way that it remains flat. We seek for a smooth function $\omega : U_p \rightarrow \mathbb{R}$, positive near $p$ such that  $\ov{\gE} = \omega^{-2} g_E$ is also flat. Since the curvature
        of ${\gE}$ is zero, the curvature of $\ov{\gE}$ will be zero if and only if $L_{\ovgE} =0 $ (indeed, this is immediate in dimension $n=3$ and
        as a consequence of the conformal invariance of the Weyl tensor in higher
        dimension). From Lemma \ref{Schout}, the metric $\ovgE$
        has $L_{\ovgE} =0$ if and only if $\omega$ satisfies the PDE
        \begin{align}
          \Hess_{\gE} \omega = \frac{1}{2 \omega} | \grad_{\gE} \omega|^2_{\gE} \gE \label{hessom}
                  \end{align}
%                   \car{As a consequence of the flatness of $g_E$ and the above equation we have
%                   \begin{equation}
%                     \nabla_A \nabla_B \nabla_C \omega - \nabla_B \nabla_A \nabla_C \omega = 0 \quad \Longrightarrow \quad  (g_E)_{BC} \nabla_A (\frac{1}{2 \omega} | \grad_{\gE} \omega|^2_{\gE} ) - (g_E)_{AC} \nabla_B (\frac{1}{2 \omega} | \grad_{\gE} \omega|^2_{\gE} ) = 0,
%                   \end{equation}
%                   which after contracting with $(g_E)^{BC}$  yields that  $\omega^{-1} |\grad_{\gE} \omega|^2$ is constant on $U_p$.} \carnote{ Creo que la prueba anterior (en rojo) estaba bien. Es mas, ahora recuerdo que lo hice asi porque solo require pedir Ricci = 0  y obtienes que $ \omega^{-1} grad (\omega)$ es constante (si $\omega$ satisface la PDE). }   
%                   
%                   
%                   \mar
                  As a consequence of of the flatness of $g_E$, the divergence of the above equation gives
                  \begin{equation}\label{lapom}
                   \covE^A \nabla_A \covE_B \omega = \covE_B \covE^A \covE_A  \omega = \covE_B \left( \frac{1}{2 \omega} | \grad_{\gE} \omega|^2_{\gE} \right) \quad \Longrightarrow\quad  \covE^A \covE_A  \omega =  \frac{1}{2 \omega} | \grad_{\gE} \omega|^2_{\gE} + K
                  \end{equation}
                  for a constant $K$. On the other hand, the trace of \eqref{hessom} is 
                \begin{equation*}
                \covE^A \covE_A  \omega =  \frac{n}{2 \omega} | \grad_{\gE} \omega|^2_{\gE},
                \end{equation*}
                which comparing with \eqref{lapom} yields that  $(2 \omega)^{-1} |\grad_{\gE} \omega|^2$ is constant on $U_p$.                            
%         
%         It is immediate from this equation that
%         $\omega^{-1} |\grad_{\gE} \omega|^2$ is constant on $U_p$. 
        Denoting this constant by $c$ the set
        of equations to be solved is
        \begin{align}
          \Hess_{\gE} \omega = c \, \gE, \qquad
          \left .
          \left (
          |\grad_{\gE} \omega |^2_{\gE} - 2 c \omega  \right ) \right |_p =0.
          \label{equomega}
        \end{align}
        Let $\{ Y^{\alpha} \}$ be defined as before, i.e. $Y^{0} = 1$,
            \begin{align*}
              \covE_{C} \covE_{D} Y^{A} = 0, \qquad Y^{A}|_p =0, \quad
              Y^{n+1} := \frac{1}{2} h_{AB} Y^{A} Y^{B}
            \end{align*}
            with $h_{AB}$ the inverse of $h^{AB}$ defined in
            \eqref{defh}. By item (ii) in Lemma \ref{CKVs}, the general solution of the first equation in \eqref{equomega} is
        $\omega = a_0 + a_{A} Y^{A} + c Y^{n+1}$ where
        $a_0, a_A$ are arbitrary constants. Since $Y^{A} |_p =
        Y^{n+1}|_p = \covE_{A} Y^{n+1} |_p =0$,
        we get
        \begin{align*}
          0 = \left ( \covE_{C} \omega \covE^{C} \omega - 2 c \, \omega \right ) |_p = \left (
         a_{A} a_{B} \covE_{C} Y^{A}
         \covE^{C} Y^{B} - c \right ) |_p = a_{A} a_{B} h^{AB} - 2 c .
        \end{align*}
        Thus, the general solution of \eqref{equomega} is
        \begin{align*}
          \omega = a_0 Y^0 + a_{A} Y^{A} +
          \frac{1}{2} a_{A} a_{B} h^{AB} Y^{n+1}.
        \end{align*}
        Given any values  $\omega_0, \omega_{A} \in \mathbb{R}$ there exists
        a unique solution $\omega$ satisfying
        \begin{align*}
          \omega |_p = \omega_0, \qquad \covE_{C} \omega |_p = \omega_{C},
        \end{align*}
        because the algebraic problem
        \begin{align*}
          \omega|_p = \left ( a_0 Y^0 + a_{A} Y^{A} \right ) |_p =
          a_0 = \omega_0, \qquad 
          \covE_{C} \omega |_p = a_{A} \covE_{C} Y^{A} |_p = a_{A} A^{A}{}_{C} = \omega_{C}
        \end{align*}
        always admits a unique solution $\{ a_0, a_{A} \}$. Now, define
        $\ov{\Omega} := \omega \Omega$ with 
        \begin{align*}
          \omega_0 = (\Omega|_p)^{-1}, \qquad \omega_{A} = - \left . \left ( \frac{1}{\Omega^2} \nabla_{A} \Omega \right ) \right |_p. 
         \end{align*}
        It is immediate that 
        $\ov{\Omega} |_p = 1$, $\nabla_{C} \ov \Omega |_p =0$
        and  that $\ovgE = \omega^{-2} \gE = \omega^{-2} \Omega^{-2} g =
        \ov{\Omega}{}^{-2} g$ is flat (in a suitable connected
        neighbourhood of $p$ where
        $\omega$ remains positive). Clearly $\ov{\Omega}$ also satisfies
        \eqref{HessOm}.
        Dropping the overlines, we have shown:
        \begin{lemma}\label{lemmaconfac}
         For any locally conformally flat manifold $(M,g)$ and point $p \in M$ there exists a {unique} choice of conformal factor $\Omega$ which satisfies that $g_E := \Omega^2 g$ is a flat metric in a neighourhood of $p$ and 
            \begin{align}
          \Omega |_p = 1, \qquad \nabla_{A} \Omega |_p = 0.
          \label{BdryOmega}
          \end{align}
        \end{lemma}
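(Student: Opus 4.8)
The plan is to prove existence and uniqueness separately, leaning on the two structural facts already established: that preserving flatness under a conformal rescaling of a flat metric reduces to a single Hessian equation, and that the solution space of that equation is completely described by Lemma \ref{CKVs}(ii). Since $(M,g)$ is locally conformally flat, I would start by fixing \emph{any} flat conformal representative $\gE = \Omega^{-2} g$ on a neighbourhood $U_p$; such an $\Omega$ exists by hypothesis. By Lemma \ref{Schout} together with $\Sch_{\gE}=0$, this $\Omega$ automatically satisfies \eqref{HessOm}. The task is then to replace $\Omega$ by a new conformal factor that keeps the metric flat while in addition enforcing the normalization \eqref{BdryOmega}.

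For existence, the key reduction is to observe that a further rescaling $\ovgE = \omega^{-2}\gE$ stays flat if and only if $\Sch_{\ovgE}=0$ (immediate for $n=3$ and, for $n\geq 4$, because the Weyl tensor is a conformal invariant and vanishes for $\gE$), which by \eqref{Schoutens} is equivalent to the PDE \eqref{hessom}. The next, slightly subtle, step is to show that $(2\omega)^{-1}|\grad_{\gE}\omega|^2_{\gE}$ is in fact constant: comparing the divergence \eqref{lapom} of \eqref{hessom} with its trace forces this. Hence $\omega$ solves $\Hess_{\gE}\omega = c\,\gE$ for a constant $c$, subject to the pointwise constraint recorded in \eqref{equomega}. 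Now I would invoke Lemma \ref{CKVs}(ii): the general solution is $\omega = a_0 Y^0 + a_A Y^A + c\,Y^{n+1}$, and the constraint at $p$ pins down $c=\tfrac{1}{2}a_A a_B h^{AB}$. Because $\covE_C Y^A|_p = A^A{}_C$ is invertible, the prescribed values $\omega|_p=\omega_0$ and $\covE_C\omega|_p=\omega_C$ determine $a_0,a_A$ uniquely. Choosing $\omega_0 = (\Omega|_p)^{-1}$ and $\omega_A = -(\Omega^{-2}\nabla_A\Omega)|_p$ and setting $\ov{\Omega} := \omega\Omega$ then yields a flat representative with $\ov{\Omega}|_p=1$ and $\nabla_C\ov{\Omega}|_p=0$, which is exactly \eqref{BdryOmega}.

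For uniqueness, I would take two conformal factors $\Omega_1,\Omega_2$ that both make $g$ flat near $p$ and both satisfy \eqref{BdryOmega}, and consider their ratio $\omega := \Omega_2/\Omega_1$, which relates the two flat metrics $\Omega_1^{-2}g$ and $\Omega_2^{-2}g$ by a conformal rescaling. By the same reasoning as above, $\omega$ solves $\Hess_{\gE}\omega=c\,\gE$, so it lies in the span described by Lemma \ref{CKVs}(ii). The normalizations give $\omega|_p=1$ and $\covE_C\omega|_p=0$; since $A^A{}_C$ is invertible, these boundary data force $a_0=1$ and $a_A=0$, whence $c=0$ and $\omega\equiv 1$, i.e. $\Omega_1=\Omega_2$. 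This is the clean linear-algebra consequence of the classification in Lemma \ref{CKVs}(ii).

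The main obstacle is not the existence construction, which is essentially algebraic once Lemma \ref{CKVs}(ii) is in hand, but rather justifying the reduction from ``flatness is preserved'' to the single scalar equation \eqref{hessom}: it relies on the dimension-dependent equivalence between flatness and vanishing Schouten tensor, and then on the non-obvious fact that the gradient-squared term is constant, which only emerges after cross-comparing the divergence and the trace of the Hessian equation. Once that reduction is secured, both existence and uniqueness follow directly from the explicit solution space of $\Hess_{\gE}\omega = c\,\gE$.
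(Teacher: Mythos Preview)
Your proposal is correct and follows essentially the same route as the paper: the existence argument (fixing an arbitrary flat representative, reducing the flatness-preserving condition to \eqref{hessom}, deducing constancy of $(2\omega)^{-1}|\grad_{\gE}\omega|^2_{\gE}$ by comparing the divergence and the trace, and then invoking Lemma \ref{CKVs}(ii) to solve for $\omega$ with prescribed $0$-jet at $p$) is exactly what the paper does in the paragraphs preceding the lemma. Your explicit uniqueness argument via the ratio $\omega=\Omega_2/\Omega_1$ is a clean way of phrasing what the paper leaves slightly implicit, namely that the parametrization of all flatness-preserving $\omega$ by their values $(\omega|_p,\covE_C\omega|_p)$ forces $\omega\equiv 1$ once both boundary values are trivial.
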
        
%         we may take the function $\Omega$ in $g_E = \Omega^{-2} g$ such that
%         \begin{align}
%           \Omega |_p = 1, \qquad \nabla_{A} \Omega |_p = 0.
%           \label{BdryOmega}
%           \end{align}
            \noindent From now on \emph {we make the choice of conformal factor as in Lemma \ref{lemmaconfac}}. 
%             We make this choice from now on.            
%             \carnote{Maybe include the discussion/result in a lemma?}
                      Since $\gE^{AB}$ (the inverse of $(\gE)_{AB}$) is given by $\gE^{AB} = \Omega^2 g^{AB}$, the 
                      $(n+1)(n+2)/2$ vector fields $\zeta^{\alpha\beta}$ introduced in \eqref{eqckvs1}¨
can also be written in the form
            \begin{align*}
              \zeta^{\alpha\beta} = \Omega^2
              \left ( Y^{\alpha} \grad_{g} Y^{\beta} -
              Y^{\beta} \grad_{g} Y^{\alpha} \right ), \qquad \alpha <
              \beta.
            \end{align*} 
            We have shown in Lemma \ref{CKVs} that these vector fields span the conformal Killing algebra of $g_E$ in
            $U_p$ and hence also the conformal Killing algebra of $g$ on the same domain. We intend to compute coefficients of the decomposition
            \begin{align*}
              \xi = \sum_{\alpha < \beta} F_{\alpha \beta}
              \zeta^{\alpha\beta} =
              F_{\alpha\beta} \Omega^2 Y^{\alpha} \grad_g Y^{\beta},
              \qquad F_{\beta\alpha} = - F_{\alpha \beta}.
            \end{align*}            
            The strategy to do that is the well-known fact (see e.g.\cite{Der12}) that
            two local conformal Killing vectors $\xi_1$ and $\xi_2$ on a semi-riemannian manifold $(M,g)$, i.e. vector fields defined on a common
            open non-empty connected neighbourhood $U \subset M$ and satisfying
            \begin{align*}
              \pounds_{\xi_1} g = 2 \Psi_1 g, \qquad \pounds_{\xi_2} g = 2 \Psi_2 g
            \end{align*}
                        are the same on $U$ if and only if, at some point $p \in U$ it holds
            \begin{align}
              (\xi_1){}_{A} |_p & = (\xi_2){}_{A} |_p,  \nonumber \\
              \left ( \nabla_{[A}(\xi_1){}_{B]} \right ) |_p
              & = \left ( \nabla_{[A} (\xi_2){}_{B]} \right ) |_p, \nonumber \\
              \Psi_1 |_p & = \Psi_2 |_p, \label{data} \\
             \nabla_{A} \Psi_1 |_p & = \nabla_{A} \Psi_2 |_p  , \nonumber 
            \end{align}
            where the indices between brackets are antisymmetrized.  
Assume that we are given a conformal Killing vector $\xi$ on $(M,g)$, so we can compute
the function $\Psi_{\xi}$ defined by
$\pounds_{\xi} g = 2 \Psi_{\xi} g$. We therefore may regard 
the following quantities as known ($p$ is, as before, any chosen point in $M$)
\begin{align*}
\xizero_{A}  :=   \xi_{A} |_p, \qquad
\omegazero_{AB} := (d \bm{\xi} )_{AB} |_p 
= \left ( \nabla_{A} \xi_{B} - \nabla_{B} \xi_{A} \right ) |_p, 
\qquad \psizero := \Psi_{\xi} |_p,
\qquad \derpsizero_{A} := \nabla_{A} \Psi_{\xi} |_p
\end{align*}
where $\bm{\xi} := g (\xi, \cdot)$. Let $\zeta_F$ be defined by
\begin{align}
\zeta_F =  \frac{1}{2} F_{\alpha\beta} \zeta^{\alpha\beta} =
              F_{\alpha\beta} \Omega^2 Y^{\alpha} \grad_g Y^{\beta} =     F_{\alpha\beta} Y^{\alpha} \grad_{\gE} Y^{\beta} 
\label{defzeta} 
\end{align}
where $F_{\alpha\beta} = F_{[\alpha\beta]}$ are arbitrary constants.
By item (iii) in Lemma \ref{CKVs} we have
\begin{align}
\pounds_{\zeta_F}  g & = \pounds_{\zeta_F} ( \Omega^2 \gE) )=
  \Omega^2 \left ( 2 \frac{ \zeta_F (\Omega)}{\Omega} \gE
+ \pounds_{\zeta_F} \gE \right ) =
\Omega^2 \left ( 2 \frac{ \zeta_F (\Omega)}{\Omega} \gE
+ 2 F_{\alpha\beta} Y^{\alpha} \delta^{\beta}_{n+1} \gE \right )
\nonumber \\
& = 2 \left ( \frac{ \zeta_F (\Omega)}{\Omega} 
+ F_{\alpha\beta} Y^{\alpha} \delta^{\beta}_{n+1} \right ) g =:
2 \Psi_F \, g \label{defPsiF}
\end{align}
with the last equality defining $\Psi_F$. Let us compute the differential of 
this function
\begin{align*}
\nabla_{C} \Psi_F & = 
\nabla_{C} \left ( \frac{1}{\Omega} 
\zeta_F^{D} \nabla_{D} \Omega \right ) 
+ F_{\alpha\beta}
\nabla_{C} Y^{\alpha} \delta^{\beta}_{n+1} \\
& = - \frac{1}{\Omega^2} \zeta_F^{D} \nabla_{D} \Omega \nabla_{C} \Omega
+ \frac{1}{\Omega} \zeta_F^{D} \nabla_{C} \nabla_{D} \Omega
+ \frac{1}{\Omega} (\nabla_{C} \zeta_F^{D} ) \nabla_{D} \Omega
+ F_{\alpha\beta}
\covE_{C} Y^{\alpha} \delta^{\beta}_{n+1} \\
& =  
- \frac{1}{\Omega^2} \zeta_F^{D} \nabla_{D} \Omega \nabla_{C} \Omega
+ \zeta_F^{D} \left ( \frac{\nabla_{A} \Omega \nabla^{A} \Omega}{2 \Omega^2}   g_{CD} -
(L_g)_{CD} \right )
+ \frac{1}{\Omega} (\nabla_{C} \zeta_F^{D} ) \nabla_{D} \Omega
+ F_{\alpha\beta}
\covE_{C} Y^{\alpha} \delta^{\beta}_{n+1} 
\end{align*}
where in the third equality we inserted \eqref{HessOm}. We elaborate the third term using the difference tensor
$S = \covE - \nabla$, explicitly given in \eqref{eqdifftens},
\begin{align*}
\nabla_{C} \zeta_F^{D} & = \covE_{C} \zeta_F^{D}
- S^{D}{}_{AC} \zeta_F^{A} =
\covE_{C} \zeta_F^{D}
+ \zeta_F^{D} \frac{\nabla_{C} \Omega}{\Omega}
+ \delta^{D}_{C} \frac{1}{\Omega} \zeta_F^{A} \nabla_{A} \Omega
- \frac{\nabla^{D} \Omega}{\Omega} g_{AC} \zeta_F^{A}
%\\
%& = F_{\alpha\beta} \left ( D_{C} Y^{\alpha} D^{D} Y^{\beta}
%+ Y^{\alpha} \delta^{\beta}_{n+1} \delta^{D}_{C} \right )
%+ \zeta_F^{D} \frac{\nabla_{C} \Omega}{\Omega}
%+ \delta^{D}_{C} \frac{1}{\Omega} \zeta_F^{A} \nabla_{A} \Omega
%- \frac{\nabla^{D} \Omega}{\Omega} g_{CD} \zeta_F^{D},
\end{align*}
%where in the last equality we used the last expression in \eqref{defzeta}. Inserting 
and insert above to get
\begin{align}
\nabla_{C} \Psi_F + \zeta_F^{D} L_g{}_{CD} & = 
 \frac{1}{\Omega^2} \zeta_F^{D} \nabla_{D} \Omega \nabla_{C} \Omega
 - 
 \frac{\nabla_{A} \Omega \nabla^{A} \Omega}{2 \Omega^2}   \zeta_F{}_{C}
+ \frac{1}{\Omega} (\covE_{C} \zeta_F^{D} ) \nabla_{D} \Omega
+ F_{\alpha\beta}
\covE_{C} Y^{\alpha} \delta^{\beta}_{n+1}  . \label{last}
\end{align}
We may now determine the coefficients $F_{\alpha\beta}$ in terms
of \eqref{data}.
%The result takes the simplest form if we construct the function $Y^{A}$ such that in addition to
%$Y^{A} |_p =0$ we impose
%\begin{align*}
%  \nabla_{C} Y^{A} |_p = A^{A}{}_{C} = \delta^{A}_{B}
%\end{align*}
  \begin{proposition}\label{propxiF}
    Let $(M,g)$ be a locally conformally flat semi-riemannian manifold
    of arbitrary signature $(r,s)$ and dimension $n \geq 3$. Fix any point $p$ in $M$ and a sufficiently small simply connected neighbourhood of $p$.
    Define $\Omega$ in $U_p$ as the unique solution of \eqref{HessOm} satisfying \eqref{BdryOmega} and let $\gE := \Omega^{-2} g$. This metric is flat in $U_p$
    and we may \car{introduce the}
    functions $\{ Y^{\alpha}\}$ and the conformal
    Killing vectors $\zeta^{\alpha\beta}$ on  $U_p$ as described above.
 %   and with the choice
 %   \begin{align*}
 %     \nabla_{C} Y^{A} |_p = \delta^{B}
    %   \end{align*}
    
       Let $\xi$ be any conformal Killing vector on $(U_p, g)$. Define $\Psi_{\xi}$ by
    $\pounds_{\xi} g = 2 \Psi_{\xi} g$ and introduce the quantities
          \begin{align*}
\xizero_{C}  :=   \xi_{C} |_p, \qquad
\omegazero_{CD} := \frac{1}{2} (d \bm{\xi} )_{CD} |_p 
\qquad \psizero := \Psi_{\xi} |_p,
\qquad \widehat{W}_{C} := \nabla_{C} \Psi_{\xi} + \xi^{D} (L_g)_{CD}|_p
      \end{align*}
      where $L_g$ is the Schouten tensor of $g$. Then $\xi$ admits the decomposition
            \begin{align*}
        \xi = \frac{1}{2} F_{\alpha\beta} \zeta^{\alpha\beta}
        = F_{\alpha\beta} \Omega Y^{\alpha} \grad_{g} Y^{\beta}
      \end{align*}
            with $F_{\alpha\beta} = F_{[\alpha \beta]}$ given by
            \begin{align*}
              F_{0 B} = B^{C}{}¨_{B} \xizero_{C}, \quad
              F_{AB} = \widehat\omega_{CD} B^{C}{}_{A}
              B^{D}{}_{B}, \qquad
              F_{A n+1} = \widehat{W}_{C} B^{C}{}_{A}, \qquad
              F_{0 n+1} = \psizero
              \end{align*}
              where $B^{A}{}_{B}$ is the inverse of $A^{A}{}_{C} := \nabla_{C} Y^{A} |_p$. i.e.
              \begin{align*}
                A^{A}{}_{C} B^{C}{}_{B} = \delta^{A}{}_{B}.
              \end{align*}             
          \end{proposition}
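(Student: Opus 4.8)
The plan is to exploit the characterisation \eqref{data}: by item (iv) of Lemma \ref{CKVs} the vectors $\zeta^{\alpha\beta}$ span the conformal Killing algebra of $\gE$, hence of $g$, so the decomposition $\xi=\tfrac12 F_{\alpha\beta}\zeta^{\alpha\beta}$ exists with unique constants and the whole content is to compute them. Since both $\xi$ and $\zeta_F:=\tfrac12 F_{\alpha\beta}\zeta^{\alpha\beta}$ are conformal Killing vectors of $g$, they coincide on $U_p$ if and only if the four data $\xi_A|_p$, $\nabla_{[A}\xi_{B]}|_p$, $\Psi|_p$ and $\nabla_A\Psi|_p$ match at $p$. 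The normalization \eqref{BdryOmega} together with Lemma \ref{CKVs} makes each of these data of $\zeta_F$ collapse to a single term at $p$, so matching them against the data of $\xi$ produces a triangular linear system for $F_{\alpha\beta}$ that inverts immediately.

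First I would record the pointwise values that drive every computation: $Y^0=1$, $Y^A|_p=0$ and $Y^{n+1}|_p=0$, while $\covE_C Y^0=0$, $\covE_C Y^A|_p = A^A{}_C$ and $\covE_C Y^{n+1}|_p=0$ (the last since $Y^{n+1}$ is quadratic in the $Y^A$), and finally $\Omega|_p=1$, $\nabla_A\Omega|_p=0$ from \eqref{BdryOmega}. As $g=\Omega^2\gE$ and $\Omega|_p=1$, the $g$–covector of $\zeta_F$ equals its $\gE$–covector at $p$, so $\zeta_F{}_D|_p = F_{\alpha\beta}Y^\alpha\covE_D Y^\beta|_p = F_{0B}A^B{}_D$, and the first line of \eqref{data} reads $\xizero_D = F_{0B}A^B{}_D$, giving $F_{0B}=B^C{}_B\,\xizero_C$ after contraction with $B$. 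For the scalar factor, \eqref{defPsiF} gives $\Psi_F$, whose term $\zeta_F(\Omega)/\Omega$ vanishes at $p$ because $\nabla\Omega|_p=0$, leaving $\Psi_F|_p = F_{\alpha\beta}Y^\alpha\delta^\beta_{n+1}|_p = F_{0\,n+1}$, so matching with $\psizero$ yields $F_{0\,n+1}=\psizero$.

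Next I would treat the curl. Differentiating $\zeta_F{}_D=F_{\alpha\beta}Y^\alpha\covE_D Y^\beta$ and using $\Hess_{\gE}Y^\beta=\delta^\beta_{n+1}\gE$ from \eqref{HessYall}, the symmetric piece drops out under antisymmetrization and one finds $\tfrac12(d\bm{\zeta_F})_{CD}=F_{\alpha\beta}\covE_C Y^\alpha\covE_D Y^\beta$ in the $\gE$ structure. The only care needed is that $d\bm{\zeta_F}$ refers to the $g$–one–form $\Omega^2\zeta_F{}_D$; expanding its exterior derivative produces extra terms proportional to $\covE\Omega$, all of which vanish at $p$ by \eqref{BdryOmega}, so the $g$– and $\gE$–exterior derivatives agree there. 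Evaluating, only $\alpha,\beta\in\{A,B\}$ survive and $\omegazero_{CD}=F_{AB}A^A{}_C A^B{}_D$; a double contraction with $B$ gives $F_{AB}=\omegazero_{CD}B^C{}_A B^D{}_B$.

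Finally, for $F_{A\,n+1}$ I would use the already-derived identity \eqref{last}, which computes precisely the combination $\nabla_C\Psi_F+\zeta_F^D(L_g)_{CD}$. At $p$ the first three terms on its right-hand side each carry a factor of $\nabla\Omega$ and hence vanish, leaving $\widehat W_C = F_{\alpha\beta}\covE_C Y^\alpha\delta^\beta_{n+1}|_p = F_{A\,n+1}A^A{}_C$; matching $\widehat W$ rather than $\nabla_C\Psi$ alone is legitimate because $\xi_A|_p=\zeta_F{}_A|_p$ is already secured, so $\xi^D(L_g)_{CD}|_p=\zeta_F^D(L_g)_{CD}|_p$ and the two matchings are equivalent. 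Contraction with $B^C{}_A$ then gives $F_{A\,n+1}=\widehat W_C B^C{}_A$. The main obstacle is purely bookkeeping: keeping straight which metric lowers indices in each datum, and verifying that \eqref{BdryOmega} really kills every $\nabla\Omega$–term so that the $g$– and $\gE$–computations coincide at $p$; once this is in hand, inverting $A^A{}_C$ via $B^C{}_B$ is immediate and the four formulas follow.
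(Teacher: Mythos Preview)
Your proposal is correct and follows essentially the same route as the paper: match the four pointwise data \eqref{data} of $\xi$ against those of $\zeta_F$, using $\Omega|_p=1$, $\nabla_A\Omega|_p=0$ and the vanishing of $Y^A,Y^{n+1},\covE_C Y^{n+1}$ at $p$ to collapse each datum of $\zeta_F$ to a single block of $F_{\alpha\beta}$ times $A$'s, then invert with $B$. Your extra remarks (that the $g$- and $\gE$-curls of $\bm{\zeta_F}$ agree at $p$ because $d\Omega|_p=0$, and that matching $\widehat W_C$ via \eqref{last} is equivalent to matching $\nabla_C\Psi$ once $\xi_A|_p=\zeta_{F\,A}|_p$ is secured) are sound and make explicit what the paper's computation leaves implicit.
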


  \begin{proof}
    Since $\zeta_F{}_{A} = \Omega^2 F_{\alpha\beta} Y^{\alpha}
    \nabla_{A} Y^{\beta}$ we need to impose (recall that $\covE_C Y^{n+1} |_p = 
\covE_C \Omega|_p =0$)
      \begin{align*}
        \xizero_{C} & = \zeta_{F}{}_{C} |_p =
        F_{\alpha\beta} \delta^{\alpha}_0 \covE_{C} Y^{\beta} |_p =
        F_{0 B} A^{B}{}_{C}, \\
        2 \omegazero_{CD}  & = (d \bm{\zeta_F} )_{CD} |_p
        = 2 \nabla_{[ C} \left ( \Omega^2 F_{\alpha\beta}
          Y^{\alpha} \nabla_{D]} Y^{\beta} \right ) |_p
        = 2 F_{\alpha\beta} \nabla_{[C} Y^{\alpha}
              \nabla_{D]} Y^{\beta} |_p =
            2 F_{\alpha\beta} \covE_{[C} Y^{\alpha}
                  \covE_{D]} Y^{\beta} |_p =
        2    F_{AB} A^{B}{}_{C} A^{B}{}_{D}, \\
                        \psizero = \Psi_F |_p  &= \left. \left ( \frac{1}{\Omega} \xi_F (\Omega)
            + F_{\alpha\beta } Y^{\alpha} \delta^{\beta}_{n+1}
            \right ) \right \rvert_p = F_{0 n+1}, \\
            \widehat{W}_{C} & = F_{\alpha\beta} \covE_{C} Y^{\alpha}
            \delta^{\beta}_{n+1} |_p = F_{A  n+1} A^{A}{}_{C},
                                           \end{align*}
      where in the third equality we used $\Psi_F$ as given in \eqref{defPsiF} and
      in the last equality we applied \eqref{last}. The result follows at once from these expressions.
  \end{proof}

  \begin{remark}
    \label{freedomY}
              The definition  of $\{ Y^{A} \}$ allows one to choose \underline{any} invertible matrix $A^{A}{}_{C}$ for its construction. However, if one wants $\{ Y^{A} \}$ to define a Cartesian coordinate system on $U_p$, then
              $A^{A}{}_{B}$ must be chosen so that 
              \begin{align}
                g^{CD} |_p A^{A}{}_{C} A^{B}{}_{D} = \eta^{AB}
\label{etacase}
              \end{align}
              where $\eta^{AB} = \mbox{diag} \{ \overbrace{-1,\cdots, -1}^{r}, \overbrace{1, \cdots 1}^{s}\}$. Note that the left hand side of \eqref{etacase} is by definition $h^{AB}\mid_p$ and that $h^{AB}$, which gives the entries of $g^{AB}$ in coordinates $\{ Y^{A} \}$ (see \eqref{defh}), is constant on $U_p$.
              %ºwith $r$ negative elements and $s$ positive ones. 
            \end{remark}

            With this proposition at hand, we can relate the coefficients $F_{\alpha\beta}$ with the determination of the conformal class to which $\xi$ belongs. For simplicity of presentation, let us assume that $(M,g)$ is compact and simply connected and let $\mbox{Conf}(M,g)$ be the conformal group, i.e.
            the collection of all conformal diffeomorphism of $(M,g)$ namely transformations $\varphi: M \rightarrow M$ satisfying $\varphi^{\star}(g)= \theta^2 g$, for some smooth positive function $\theta : M \rightarrow \mathbb{R}$. The vector field
            $\varphi^{-1}_{\star}(\xi)$ defines a conformal Killing vector of $(M,g)$ and by definition, the conformal class is the collection of all such conformal
            vector fields. It is known (see \cite{IntroCFTschBook}, \cite{blairconformal})
            that the conformal Killing algebra of $(M,g)$ is isomorphic (as a vector space) to the set of skew-symmetric
            endomorphisms $(V,\eta)$ where $V \simeq \mathbb{R}^{n+2}$ and
            $\eta$ is a metric of signature $(r+1,s+1)$. 
            The map is also an anti-isomorphism of Lie algebras. Note that the skew-symmetry here is defined with respect to the interior product $\left\langle \cdot , \cdot \right \rangle $ defined by $\eta$, namely, an endomorphism $F$ is skew-symmetric if for every pair of vectors $u,v \in \mathbb{R}$ it satisfies $\left\langle F(u),v\right \rangle = - \left\langle u,F(v)\right\rangle$. 

            Let us call the conformal Killing algebra of $(M,g)$ by $\mbox{CKill}(M,g)$, the vector space $(V,\eta)$ as $\mathbb{R}^{(r+1,s+1)}$,
            the set of skew-symmetric endomorphisms of $\mathbb{R}^{(r+1,s+1)}$
 by $\mbox{SkewEnd}(\mathbb{R}^{(r+1,s+1)})$ and denote the isomorphism above by
            \begin{align*}
              \Psi :  \mbox{CKill}(M,g) \longrightarrow
              \mbox{SkewEnd}(\mathbb{R}^{(r+1,s+1)}).
              \end{align*}
              It turns out that the conformal class of $\xi$ is
              \begin{align*}
                [\xi] = \{ \Psi^{-1} ( F_{\Lambda} ) \quad \mbox{for all} \quad 
                F_{\Lambda} = \Lambda^{-1} \Psi(\xi) \Lambda, \quad 
                \Lambda \in O(r+1,s+1) \}
              \end{align*}
                            where the orthogonal group $O(r+1,s+1)$ acts on
              $\mathbb{R}^{(r+1,s+1)}$ in a natural way.

              The construction of the map $\Psi$ can be done in several ways and is a consequence of the fact that $(M,g)$ can be isometrically embedded in the null cone of the origin in $\mathbb{R}^{r+1,s+1}$. The explicit representation of $\Psi$
              relies on a choice of a flat representative $\gE$ in the
              conformal class $[g]$ of $g$ in a sufficiently small neighbourhood $U_p$ of $p$ and a choice of Cartesian coordinates $X^{A}$ in $U_p$. To $g_E$ and $\{ X^{A}\}$ one associates an orthonormal basis $B:= \{ e_{\alpha} \}$
             of $\mathbb{R}^{r+1,s+1}$ with $e_0$ timelike. The most general conformal Killing vector
              $\xi \in \mbox{CKill}(M,g)$ restricted to $U_p$ can be
              written in terms of constants $\{ \br_A, \nu, \ar_A, \omega_{AB} =
              \omega_{[AB]}\}$ as
              \begin{align}
              \xi = \left ( \br^A + \nu X^A +  (\ar_B X^B) X^A - \frac{1}{2} (X_B X^B) \ar^A - \omega^A{}_{B} X^B \right ) \partial_{X^A} 
\label{Killvec}
               \end{align}
              where indices are raised and lowered with
              $\eta^{AB}$ (cf. remark \ref{freedomY}) and its inverse.
              Then, the endomorphism $\Psi(\xi) \in \mbox{SkewEnd} (\mathbb{R}^{r+1,s+1)})$ expressed in the basis $B$, i.e.
              $(F_{\xi})^{\alpha}{}_{\beta} e_{\alpha} = \Psi(\xi) (e_{\beta})$
              is given by  $F_{\xi}{}^{\alpha}{}_{\beta} =
\eta^{\alpha\mu} \bm{F_{\xi}}_{\mu\beta}$ 
\begin{align*}
\eta_{\alpha\mu} = \eta^{\alpha\mu} = \mbox{diag} \{ -1,1,\underbrace{-1,\cdots, -1}_{r},\underbrace{1,\cdots, 1}_{s} \}
\end{align*}
and  the matrix $(\bm{F_{\xi}})_{\mu\beta}$ is given by (the first index is row and the second column)
\begin{equation}
\bm{F_{\xi}} = 
  \begin{pmatrix}
  0 &  \nu & \ar^t - \br^t/2 \\
  -\nu & 0 & - \ar^t - \br^t/2 \\
  -\ar + \br/2 & \ar +\br/2 & -\pmb{\omega}
  \end{pmatrix}.
\end{equation}
where $\ar,\br \in \mathbb{R}^n$ are column vectors with components $\ar_A, 
\br_A$ respectively, $t$ denotes transponse and $\pmb{\omega}$ is the $n \times m$ skewsymmetric matrix with components $\omega_{AB}$.
We can now make connection to the previous decomposition. Assume for the moment that we take $A^{A}{}_{B}$ such that \eqref{etacase} holds. Then $\{Y^{A}\}$ defines
a Cartesian coordinate system of the flat metric $g_E$ in $U_p$. Moreover, since $Y^{n+1} = \frac{1}{2} \eta_{AB} Y^A Y^B$ it is straightforward to check that
setting $X^A = Y^A$, the conformal Killing vector \eqref{Killvec}
can also be written as
\begin{align*}
\xi & = F_{\alpha\beta} Y^{\alpha} \covE^A Y^{\beta} \partial_{Y^A} =
F_{\alpha\beta} Y^{\alpha} \grad_{\gE} Y^{\beta}, 
\end{align*}
with
\begin{align*}
F_{0 A} & = \br_A, \qquad F_{0 n+1} = \nu, \qquad F_{AB} = \omega_{AB}
\qquad F_{C, n+1} = a_C
\end{align*}
In order to match the two constructions we need to introduce a vector
space of dimension $n+2$ and a metric of signature $(r+1,s+1)$. Define
$\hat{V} := \mbox{span}\{ Y^{\alpha} \}$ and endow this space with the scalar
product $Q$ defined by
\begin{align*}
  Q (Y^{\alpha}, Y^{\beta} ) := Q^{\alpha\beta}
\end{align*}
where the constants $Q^{\alpha\beta}$ are defined in item (i) of Lemma
\ref{CKVs}. This metric is well-defined (i.e. independent of the choice of
functions $Y^{A}$) because under a $GL(n)$ transformation\footnote{Recall that $\{Y^A\}$ spans the solution space of \eqref{Hesseq}. Thus any other set of linearly independent solutions $\{Y'^A\}$ must necessarily differ from $\{Y^A\}$ by a $GL(n)$ transformation.} $M$
\begin{align*}
  Y'{}^{A} = M^{A}{}_{B} Y^{B}
\end{align*}
we have $Y'^{0} = Y^0$ (obvious) and $Y'{}^{n+1} = Y^{n+1}$ because
\begin{align*}
    Y'{}^{n+1} = \frac{1}{2} h'_{AB} Y'{}^A Y'{}^B =
    \frac{1}{2} h'_{AB} M^A{}_C M^{B}_C Y^C Y^D = \frac{1}{2} h_{CD} Y^A
    Y^D = Y^{n+1}
\end{align*}
and the last equality follows from definition \eqref{defh} (and its corresponding prime)
together with
\begin{align*}
  A'{}^{A}_{B} := \covE_B Y'{}^A |_p = M^A{}_C A^{C}{}_B.
  \end{align*}
    Hence, from the definition of $Q^{\alpha\beta}$ we have
    \begin{align*}
      Q'{}^{\alpha\beta} = Q^{\mu\nu} M^{\alpha}{}_{\mu} M^{\beta}{}_{\nu},
    \end{align*}
    with $M^{0}{}_{0} = M^{n+1}{}_{n+1}=1$, $M^{\alpha}{}_{\beta} = M^A{}_B$
    for $\alpha =A$, $\beta=B$ and the rest are zero.

    In the case that $\{Y^{A}\}$ are Cartesian coordinates (i.e. when $h^{AB} = \eta^{AB}$), then we can construct an orthonormal basis of $\hat{V}$ by introducing
    \begin{align*}
      E^0 := Y^0 + \frac{1}{2} Y^{n+1}, \qquad
      E^1 :=  Y^0 - \frac{1}{2}Y^{n+1}, \qquad
      E^{A+1} := Y^{A} .
     \end{align*}
     The endomorphism of $\hat{V}$ defined by
     \begin{align*}
     F_{\xi}(Y^{\alpha}) := Q^{\alpha\beta} F_{\beta\mu} Y^{\mu}
     \end{align*}
     is identical to the endomorphism $\Psi(\xi)$ if we identify
     $V = \hat{V}$ and the basis vectors $e_{\alpha} = \eta_{\alpha\beta} E^{\beta}$.

     The classification of the endomorphism $\Psi(\xi)$ up to conjugacy class is obviously independent of the choice of basis. Thus, once we have established the equivalence of $\Psi(\xi)$ and $F_{\xi}$ we may use {\it any} basis $\{Y^{\alpha}\}$, not necessary orthogonal. From the point of view of the original space $(M,g)$ a natural choice is the (non-orthogonal) basis defined by
     \begin{align*}
       \covE_A Y^B |_p = \delta^B{}_A .
     \end{align*}
     In such a basis, the expression of $F_{\alpha\beta}$ is simplest, while the expression of $Q^{\alpha\beta}$ is just
     \begin{align*}
       Q^{0,n+1} = Q^{n+1,0} = -1, \qquad Q^{AB} = g^{AB} |_p .
     \end{align*}
     We summarize this result in the following theorem:
     \begin{theorem}\label{theoendocords}
\label{endomorfism}
       Let $(M,g)$ be a locally conformally flat semi-riemannian space of arbitrary
       signature $(r,s)$ and dimension $n \geq 3$. Fix a point $p \in M$
       and a local conformal Killing vector $\xi$ defined in a sufficiently small open neighbourhood $U_p$ of $p$. Let $\Psi_{\xi} : U_p \rightarrow \mathbb{R}$
       be defined by $\pounds_{\xi} g = 2 \Psi_{\xi}$.

       Then the conformal class of $\xi$ is determined by the
       conjugacy class under $O(r+1,s+1)$
       of the skew-symmetric endomorphism $F_{\xi}$ on $(\mathbb{R}^{r+1,s+1}, Q)$ defined by
       $F_{\xi} (v^{\alpha}) = Q^{\alpha\beta} F_{\beta\mu} v^{\mu}$ where
       $v^{\alpha}$ is a basis of $\mathbb{R}^{r+1,s+1}$
       with non-zero scalar products
       \begin{align*}
         Q^{\alpha\beta} := Q(v^{\alpha}, v^{\beta}) = \left \{ \begin{array}{ll}
           -1 & \mbox{if} \quad  \alpha =0, \beta = n+1 \quad \mbox{or} \quad
           \alpha =n+1, \beta = 0 \\
           g^{AB} |_p &  \mbox{if} \quad \alpha = A, \beta = B \\
           0 &  \mbox{rest of terms} 
         \end{array}
         \right . 
       \end{align*}
       and the coefficients $F_{\beta\mu} = F_{[\beta\mu]}$ are given by
       \begin{align*}
         F_{0, A} = \xi_A |_p, \qquad
         F_{AB} = \nabla_{[A} \xi_{B]} |_p, \qquad
           F_{0, n+1} = \Psi_{\xi} |_p, \qquad
           F_{A, n+1} = ( \nabla_A\Psi_{\xi} + \xi^B (L_g)_{AB} ) |_p
       \end{align*}
       where $L_g$ is the Schouten tensor of $g$ and $\nabla$ its covariant derivative.

       \end{theorem}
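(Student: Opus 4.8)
This statement is a synthesis of Proposition \ref{propxiF} and the discussion preceding it, so the plan is organisational rather than computational. I would carry it out in three moves: (i) specialise the decomposition of $\xi$ in Proposition \ref{propxiF} to the most convenient basis of solutions of \eqref{Hesseq}, so that the coefficients $F_{\alpha\beta}$ and the Gram matrix $Q^{\alpha\beta}$ acquire exactly the form quoted in the theorem; (ii) show that the endomorphism $F_\xi$ built from these data coincides, as an element of $\skwend{\mathbb{R}^{(r+1,s+1)}}$ up to the orthogonal group, with the standard endomorphism $\Psi(\xi)$; and (iii) identify the $O(r+1,s+1)$-conjugacy class of $\Psi(\xi)$ with the conformal class $[\xi]$.

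For step (i) I would use the freedom recorded in Remark \ref{freedomY} to fix the functions $\{Y^A\}$ by $A^A{}_C := \covE_C Y^A|_p = \delta^A_C$; any invertible choice is admissible, and this one is not Cartesian unless $g^{AB}|_p = \eta^{AB}$. Because $\nabla_C\Omega|_p = 0$ by Lemma \ref{lemmaconfac}, the difference tensor \eqref{eqdifftens} vanishes at $p$, so $\nabla_C Y^A|_p = \covE_C Y^A|_p$ and the inverse matrix $B^A{}_C$ of Proposition \ref{propxiF} is the identity. Feeding $B^A{}_C = \delta^A_C$ into that proposition returns precisely $F_{0A} = \xi_A|_p$, $F_{AB} = \nabla_{[A}\xi_{B]}|_p$, $F_{0,n+1} = \Psi_\xi|_p$ and $F_{A,n+1} = (\nabla_A\Psi_\xi + \xi^B(L_g)_{AB})|_p$. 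Simultaneously $h^{AB} = h^{AB}|_p = g^{AB}|_p$ from \eqref{defh}, so item (i) of Lemma \ref{CKVs} gives $Q^{0,n+1} = Q^{n+1,0} = -1$, $Q^{AB} = g^{AB}|_p$ and zero otherwise, matching the theorem. That $F_\xi(v^\alpha) = Q^{\alpha\beta}F_{\beta\mu}v^\mu$ is $Q$-skew-symmetric is then immediate from the antisymmetry of $F_{\alpha\beta}$ and the symmetry of $Q^{\alpha\beta}$.

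For step (ii) I would first treat the Cartesian case $h^{AB} = \eta^{AB}$, which is the explicit check already sketched before the theorem: in the orthonormal basis $E^0 = Y^0 + \tfrac12 Y^{n+1}$, $E^1 = Y^0 - \tfrac12 Y^{n+1}$, $E^{A+1} = Y^A$ and $e_\alpha = \eta_{\alpha\beta}E^\beta$, the matrix of $F_\xi$ reproduces the standard form $\bm{F_\xi}$, identifying $F_\xi$ with $\Psi(\xi)$. To pass to the non-Cartesian basis I would appeal to the $GL(n)$-covariance already established: under $Y'^A = M^A{}_B Y^B$ the Gram matrix transforms as $Q'^{\alpha\beta} = Q^{\mu\nu}M^\alpha{}_\mu M^\beta{}_\nu$ with $M$ block-diagonal and fixing $Y^0, Y^{n+1}$, and the coefficients $F_{\alpha\beta}$, being components of $\xi$ relative to the $\zeta^{\alpha\beta}$-basis, carry the corresponding transformation. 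Hence $F_\xi$ is really an endomorphism of the intrinsic space $\hat V = \spn{Y^\alpha}$, independent of the $GL(n)$ freedom, while $(\hat V, Q)$ is isometric to $(\mathbb{R}^{r+1,s+1},\eta)$; consequently the matrix of $F_\xi$ in the economical basis of step (i) is $O(r+1,s+1)$-conjugate to the Cartesian $\Psi(\xi)$ (whose matrix is $\bm{F_\xi}$), so the two define the same conjugacy class.

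For step (iii), where the genuine content lies, I would invoke the anti-isomorphism $\Psi : \mbox{CKill}(M,g) \to \skwend{\mathbb{R}^{(r+1,s+1)}}$ arising from the isometric embedding of $(M,g)$ into the null cone of $\mathbb{R}^{r+1,s+1}$: a conformal diffeomorphism $\varphi \in \conf{M,g}$ acts on a CKV by pushforward and, under $\Psi$, this action becomes conjugation of $\Psi(\xi)$ by the element of $O(r+1,s+1)$ that $\varphi$ induces on the cone. As $\varphi$ sweeps out $\conf{M,g}$ the images fill the whole orbit $\{\Lambda^{-1}\Psi(\xi)\Lambda : \Lambda \in O(r+1,s+1)\}$, whence $[\xi]$ is in one-to-one correspondence with the conjugacy class $[\Psi(\xi)] = [F_\xi]$; independence of the base point $p$ is automatic, since relocating $p$ is itself a conformal motion and merely conjugates $F_\xi$ within its orbit. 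The one step I would not reprove but cite from the classical theory (\cite{IntroCFTschBook,blairconformal}) is precisely this intertwining — that $\Psi$ transports the conformal-group action to $O(r+1,s+1)$-conjugation and that the conformal group of a (simply connected) locally conformally flat manifold realises the full conjugation action of $O(r+1,s+1)$; this is the main obstacle, everything else being the bookkeeping of bases above.
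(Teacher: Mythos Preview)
Your proposal is correct and follows essentially the same route as the paper: the theorem is stated there as a summary of the preceding discussion, which does precisely your three moves --- specialise Proposition \ref{propxiF} to the basis with $\covE_A Y^B|_p=\delta^B{}_A$ (noting, as you do, that $\nabla$ and $\covE$ agree at $p$ by Lemma \ref{lemmaconfac}), match $F_\xi$ with the standard $\Psi(\xi)$ via the Cartesian case and the $GL(n)$-covariance of $(\hat V,Q)$, and then invoke the cited fact that $\Psi$ intertwines the conformal action with $O(r+1,s+1)$-conjugation. Your identification of the last step as the only non-bookkeeping ingredient, taken from \cite{IntroCFTschBook,blairconformal}, is exactly how the paper treats it as well.
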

     
     \subsection{The Riemannian case ($r=0, s=n$)}\label{secriem}
%      \begin{remark}
The method of classification of CVKs employed in \cite{marspeondata21,marspeonKSKdS21} requires to find explicitly a flat representative $g_E$ in the class of locally conformally flat metrics and also Cartesian coordinates for $g_E$. However, this may be a very hard task.  Theorem \ref{theoendocords} improves the classification method in  \cite{marspeondata21,marspeonKSKdS21} as it allows to obtain the conformal class of a CKV with {\it independence on the coordinates and the representative} of the class of conformally flat metrics. We shall provide an interesting application of this result in Section \ref{secspecial}. For that, we now introduce the explicit classification of CKVs in conformally flat Riemannian metrics.

     From the discussion above it follows that, for locally conformally flat Riemannian  $n$-metrics, the classification of conformal \car{classes} of CKVs is equivalent to the classification of $\mbox{SkewEnd}(\mathbb{R}^{(1,n+1)})$ up to $O(1,n+1)$ transformations. In order to uniquely characterize the conjugacy class $[F] = \{F_\Lambda \in \mbox{SkewEnd}(\mathbb{R}^{(1,n+1)}) \mid F_\Lambda = \Lambda F \Lambda^{-1},\quad \Lambda \in O(1,n+1)\}$ one needs to find a sufficient number of $ O(1,n+1)$-invariant quantities. A possibility \cite{marspeon21} is to give the eigenvalues\footnote{It is preferable to use the eigenvalues of $F^2$ because they are real and they are in one-to-one correspondence with the (complex) eigenvalues of $F$.} of $F^2$ together with the causal character of $\ker F$ (see also \cite{Kdslike} for an alternative classification in terms of the traces of even powers of $F$ and its matrix rank). Observe \cite{marspeon21} that, as a consequence of the skew-symmetry of $F$, all eigenvalues of $F^2$ are at least of double multiplicity and there is always a vanishing eigenvalue if $n$ is odd. Hence, it turns out \cite{marspeon21} to be sufficient to determine the roots of the following polynomial
\begin{equation}\label{defQF2}
\mathcal{Q}_{F^2}(x) := \lr{\mathcal{P}_{F^2}(-x)}^{1/2} \quad \mbox{(if $n$ even)}, \quad\quad \mathcal{Q}_{F^2}(x) := \lr{\frac{\mathcal{P}_{F^2}(-x)}{x}}^{1/2} \quad \mbox{(if $n$ odd)},
\end{equation}
where $\mathcal{P}_{F^2}(-x)$ refers to the characteristic polynomial of $-F^2$. Counting multiplicity, $\mathcal{Q}_{F^2}(x)$ has $p$ roots, where $p$ is the natural number related to the dimension $n$ by 
\begin{equation}\label{eqpq}
 p := \left\lfloor\frac{n+1}{2}\right\rfloor -1,
\end{equation}
\car{being $\lfloor x \rfloor \in \mathbb{Z} $ the floor function for all $x \in \mathbb{R}$, i.e. the largest integer which is equal or less than $x$.}
% which are useful to provide unified expressions, independently on the parity of $n$. Note  that $q=p+1$ whenever $n$ is even, while $q = p$ when $n$ is odd. From the properties of $F^2()$, it follows \cite{marspeon21} that 
% $\mathcal{Q}_{F^2}(x)$ is a polynomial of degree $q+1$ with $q+1$ real roots counting multiplicity, with at most one of which negative. 
Then, the classification result of equivalence classes of $F$ is given by the following Proposition:
\begin{proposition}[\hspace{-0.025cm}\cite{marspeon21}]\label{defgammamu}
 Let $\mathrm{Roots}\lr{\mathcal{Q}_{F^2}}$ denote the set of roots of $\mathcal{Q}_{F^2}(x)$ repeated as many times as their multiplicity and arranged as follows:
 \begin{enumerate}
  \item[a)] If $n$ odd, $\lrbrace{\sigma; \mu_1^2, \cdots, \mu_{p}^2} := \mathrm{Roots}\lr{\mathcal{Q}_{F^2}}$ sorted by $\sigma \geq \mu_1^2\geq \cdots \geq \mu_{p}^2$ if $\ker F$ is timelike, where in this case necessarily $\sigma > 0$. Otherwise $\mu_1^2\geq \cdots \geq \mu_{p}^2\geq 0 \geq \sigma$. 
  \item[b)] If $n$ even, 
  $\lrbrace{-\mu_t^2, \mu_s^2; \mu_1^2, \cdots, \mu_{p}^2} := \mathrm{Roots}\lr{\mathcal{Q}_{F^2}}$ sorted by $\mu_1^2\geq \cdots \geq \mu_{p}^2\geq\mu_s^2 = -\mu_t^2 = 0$ if $\ker F$ is null. Otherwise  $\mu_s^2 \geq \mu_1^2\geq \cdots \geq \mu_{p}^2\geq 0 \geq  -\mu_t^2$, where either $\mu_s^2$ or $\mu_t^2$ are non-zero.  
 \end{enumerate}
 Then the parameters $\lrbrace{\sigma; \mu_1^2, \cdots, \mu_{p}^2}$ for $n$ odd and $\lrbrace{-\mu_t^2, \mu_s^2; \mu_1^2, \cdots, \mu_{p}^2}$  for $n$ even determine uniquely the class of $F$ up to $O(1,n+1)$ transformations and hence also the class of $\xi := \Psi^{-1}(F)$ up to conformal transformations.  
\end{proposition}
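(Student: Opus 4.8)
The plan is to treat this as a statement in linear algebra about $\skwend{\mathbb{R}^{(1,n+1)}}$ modulo conjugation by $O(1,n+1)$, and to prove it in two stages: that the listed quantities are genuine \emph{invariants} of the conjugacy class, and then (the substantial part) that they form a \emph{complete} invariant, i.e.\ they determine the orbit of $F$. The invariance is immediate: under $F \mapsto \Lambda F \Lambda^{-1}$ one has $F^2 \mapsto \Lambda F^2 \Lambda^{-1}$, so the characteristic polynomial of $-F^2$ and hence the roots of $\mathcal{Q}_{F^2}$ are unchanged; and since $\Lambda$ is an isometry with $\ker(\Lambda F \Lambda^{-1}) = \Lambda(\ker F)$, the causal character of $\ker F$ is preserved. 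The skew-symmetry of $F$ makes $F^2$ self-adjoint for $\eta$; the same property forces every eigenvalue of $F^2$ to occur with even multiplicity and forces a zero eigenvalue when $\dim \mathbb{R}^{(1,n+1)} = n+2$ is odd. This is precisely what makes $\mathcal{P}_{F^2}(-x)$, respectively $\mathcal{P}_{F^2}(-x)/x$, a perfect square, so that $\mathcal{Q}_{F^2}$ is a well-defined polynomial with the stated number of roots, and it is what guarantees that the roots can be arranged into the parameter sets of items (a) and (b).

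For completeness I would construct an explicit normal form for $F$ by decomposing $V := \mathbb{R}^{(1,n+1)}$ into mutually $\eta$-orthogonal, $F$-invariant subspaces, on each of which $F$ acts in a single standard way controlled by one parameter. Since $F^2$ is $\eta$-self-adjoint, its real eigenspaces (and, where the induced metric is degenerate, its root spaces) are mutually $\eta$-orthogonal and span $V$. A strictly positive root $\mu_i^2$ corresponds to the $F^2$-eigenvalue $-\mu_i^2 < 0$ and yields a Euclidean $2$-plane on which $F$ is the rotation $\begin{pmatrix} 0 & -\mu_i \\ \mu_i & 0 \end{pmatrix}$, fixed up to $O(2)$ by $\mu_i$ alone. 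Because $V$ carries a single timelike direction, all the indefiniteness is concentrated in one remaining ``timelike-type'' summand, and its three possible normal forms are indexed exactly by the causal character of $\ker F$: a boost in a Lorentzian $2$-plane (a positive $F^2$-eigenvalue, hence a negative distinguished slot, with $\ker F$ spacelike), a kernel line carrying the timelike direction (with $\ker F$ timelike), or the degenerate null-rotation block (with the distinguished slot vanishing and $\ker F$ null). On each block $F$ is determined up to its stabiliser in $O(1,n+1)$ by the single associated eigenvalue, so reassembling the blocks shows that the multiset of $F^2$-eigenvalues together with the causal character of $\ker F$ fixes the orbit. The orderings (a) and (b) are then just a bookkeeping device that reads the causal character off the sign and position of the distinguished entries $\sigma$, respectively $(-\mu_t^2,\mu_s^2)$; the final clause about $\xi = \Psi^{-1}(F)$ follows at once from the correspondence between conformal classes of CKVs and $O(1,n+1)$-orbits established in Theorem \ref{theoendocords}.

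I expect the main obstacle to be the \emph{degenerate} case in which $\ker F$ is null. There the restriction of $\eta$ to the $F$-invariant subspace supporting the zero eigenvalue of $F^2$ is itself degenerate, so the usual spectral decomposition breaks down and $F^2$ acquires a non-semisimple (Jordan) part. One must instead produce the null-rotation normal form by hand on a Lorentzian block spanned by a null vector, its null conjugate, and a spacelike vector, and then verify two things: that every skew $F$ with null kernel can be conjugated into this form by an element of $O(1,n+1)$, and that this configuration is \emph{not} separated from the non-degenerate ones by the eigenvalues of $F^2$ alone. This last point is exactly why the causal character of $\ker F$ must be carried as an independent invariant, and it is the crux of the argument. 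A secondary check is that the sorting conventions in (a)--(b) assign the boundary configurations (a vanishing $\sigma$, or $\mu_s^2 = -\mu_t^2 = 0$) to the correct causal character and that no orbit is listed twice.
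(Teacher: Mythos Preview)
The paper does not contain its own proof of this proposition: it is quoted verbatim from the reference \cite{marspeon21} (note the citation in the proposition header), so there is nothing in the present paper to compare your attempt against. Your sketch is along the standard lines one would expect for such a result---invariance of the spectrum of $F^2$ and of the causal character of $\ker F$, followed by an $\eta$-orthogonal block decomposition into Euclidean rotation planes plus a single Lorentzian-signature summand whose normal form is fixed by the causal character of $\ker F$---and this is indeed the strategy pursued in \cite{marspeon21}.

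Two small remarks on your write-up. First, your description of the ``timelike-type'' summand is tailored to the odd-$n$ case (a two-dimensional Lorentzian block carrying a single parameter $\sigma$); in the even-$n$ case the Lorentzian summand is four-dimensional and carries \emph{two} parameters $(-\mu_t^2,\mu_s^2)$, realised as a commuting boost-plus-rotation, and the null-kernel degeneration there is a genuine $3\times 3$ nilpotent block rather than a $2\times 2$ one. Your outline would need to treat this explicitly. Second, the final clause about $\xi=\Psi^{-1}(F)$ does not rely on Theorem~\ref{theoendocords} (which in fact \emph{uses} the present proposition); the correspondence between conformal classes of CKVs and $O(1,n+1)$-orbits is set up in the discussion preceding that theorem, via the isomorphism $\Psi:\mathrm{CKill}(M,g)\to\skwend{\mathbb{R}^{(1,n+1)}}$.
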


\begin{remark}\label{remarknullker}
 It holds in general that  $\mathcal{Q}_{F^2}$ has at most one negative root \cite{marspeon21}. When $n$ is even and $\ker F$ is null, then  $\mathcal{Q}_{F^2}$ has a root at least double at zero and no negative roots  \cite{marspeon21}. These are necessary facts that follow from Proposition \ref{defgammamu}.
\end{remark}

 Note that endomorphisms with equal roots of $\mathcal{Q}_{F^2}$ (hence equal eigenvalues with same multiplicities) can belong to different conformal classes. The idea in \cite{marspeon21} is to introduce an additional invariant, namely the causal character of $\ker F$, to remove this ambiguity by defining $\mathrm{Roots}\lr{\mathcal{Q}_{F^2}}$, whose elements are sorted depending on $\ker F$. This gives a well-defined parametrization of the space of conformal classes, which will be key in Section \ref{secspecial}.

  \section{The asymptotic Cauchy problem and the Kerr-de Sitter-like class}\label{secKdSlike}

    In this section we review the basics on the asymptotic Cauchy problem for $(\Lambda > 0)$-vacuum, $(n+1)$-dimensional spacetimes and introduce the definition and properties of the Kerr-de Sitter-like class of spacetimes, in four dimensions \cite{Kdslike,KdSnullinfty}, as well as its extension to higher dimensions  \cite{marspeondata21,marspeonKSKdS21}. The results in this section are not new, but they will be needed in Section \ref{secspecial}. The following discussion is meant to summarize these results in order to make the paper self-contained. 
   
%  to prove that the algebraically special metrics with non-degenerate optical matrix exhaust the Kerr-de Sitter-like class \cite{Kdslike,KdSnullinfty}. 
% \end{remark}

% The space of conformal classes of CKVs is central in the asymptotic classification of Kerr-de Sitter and related spacetimes, known as the {\it Kerr-de Sitter-like class} (cf. \cite{Kdslike,KdSnullinfty} in four dimensions and \cite{marspeondata21,marspeonKSKdS21} for its generalization to higher dimensions). In order to understand this statement, f

\subsection{Asymptotic Cauchy problem with $\Lambda >0$}\label{seccauchy}

{As we already mentioned in the introduction}, in some situations, $(n+1)$-dimensional, $(\Lambda>0)$-vacuum spacetimes $(\widetilde{{M}},\widetilde g)$ admitting a conformal extension $({M},g := \Omega^2 \widetilde g)$ can be characterized in a neighbourhood of $\scri$ by asymptotic initial data (i.e. data prescribed at $\scri$). This is true in general for $n=3$ by the classical results by Friedrich \cite{friedrich81,friedrich81bis,Fried86geodcomp}. In higher dimensions, the asymptotic characterization results stem from the Fefferman and Graham formalism \cite{FeffGrah85,ambientmetric}, and hold in general for $(n+1)$ even dimensions \cite{Anderson2005,andersonchrusciel05,Kaminski21} or when the asymptotic initial data are analytic \cite{kichenassamy03}. 
% We shall not provide further details on these results, so we refer the interested reader to the above references. 

In what follows we restrict to $(n+1)$-dimensional (with $n \geq 3$), $(\Lambda>0)$-vacuum metrics which admit a locally conformally flat $\scri$. In this case, the asymptotic initial data 
% in $n+1$ dimensions for $(\Lambda>0)$-vacuum metrics is 
\cite{marspeondata21} is a conformally flat Riemannian $n$-manifold $(\Sigma,\gamma)$, which prescribes the geometry of $\scri$ via an isometric embedding $\iota: \Sigma \hookrightarrow \scri \subset M$, together with a transverse and traceless tensor, or TT tensor,  $D$ which prescribes the electric part of the rescaled Weyl tensor. Namely
\begin{equation}\label{rescweyl}
% in \cite{marspeondata21} (see also \cite{Holl05}) that for metrics admitting a {\it locally conformally flat $\scri$} the following scaling of the Weyl tensor 
 D := \iota^\star\lr{ |\grad_g \Omega|^{-2} \Omega^{2-n} C(\grad_g \Omega,\cdot, \grad_g \Omega,\cdot)}
\end{equation}
where $C$ is the $4$-covariant Weyl tensor of $({M},g)$. The conformal flatness of $\scri$ is not required in the $n=3$ case for \eqref{rescweyl} to hold true,
% the latter geometric identification of the data, 
but it is indeed needed for $n>3$ \cite{marspeondata21}. Actually, the (smooth) extendability of $\Omega^{2-n} C(\grad_g \Omega,\cdot, \grad_g \Omega,\cdot)$ to $\scri$ is a non-trivial result (cf. \cite{marspeondata21,Holl05}) for $n>3$, which relies strongly on the assumption of local conformal flatness of $\scri$. For $n = 3$ the extendability of $\Omega^{-1} C$ to a generic $\scri$ is a consequence of the Weyl tensor vanishing identically in three dimensions.

In addition, the characterization of spacetimes in terms of asymptotic initial data is independent of the conformal factor $\Omega$. As a consequence the data have the following conformal equivalence 
\begin{equation}\label{equivdatan}
 (\Sigma,\gamma,D) \simeq (\Sigma,\omega^2 \gamma, \omega^{2-n} D),\quad\quad  \omega \in C^\infty(\Sigma,\mathbb{R}^+),
\end{equation}
in the sense that any pair of data correspond to the same physical spacetime $(\widetilde{{M}},\widetilde g)$ if and only if they are related by \eqref{equivdatan}.

Now, let $(\Sigma,\gamma,D)$ be asymptotic data for $(\widetilde M, \widetilde g)$. Then, for a CKV $\xi$ of $\gamma$, the following KID equation 
\begin{equation}\label{kideq}
\mathcal{L}_\xi D + \frac{n-2}{n} (\mathrm{div}_\gamma \xi) D = 0
\end{equation}
is proven to be a necessary and sufficient condition for $\widetilde g$ to admit a KV $X$ such that $\xi = X \mid_\scri$, in general if $n=3$\cite{KIDPaetz} and assuming that $(\gamma,D)$ are analytic for $n > 3$ \cite{marspeondata21} (there is no proof yet for the general $n>3$ case, but we believe that \eqref{kideq} asymptotically characterizes symmetries in general.) 
It is a matter of direct computation to show that if $(\Sigma,\gamma,D; \xi)$ satisfies the KID equation, \car{so does} $(\Sigma,\omega^2 \gamma,\omega^{2-n} D; \xi)$. Then, for any element $\varphi \in \conf{\Sigma,\gamma}$, the following equivalences are ready
\begin{equation}\label{equivKID}
 (\Sigma,\gamma,D ; \xi) \simeq (\Sigma,\varphi^\star(\gamma),\varphi^\star(D) ; \varphi_\star^{-1}(\xi)) = (\Sigma,\omega^2 \gamma, \omega^{2-n} D; \varphi_\star^{-1}(\xi)) \simeq (\Sigma,\gamma,D ; \varphi_\star^{-1}(\xi)),
\end{equation}
the first one arising from the fact that $\varphi$ is a diffeomorphism and the last one from \eqref{equivdatan}. Therefore, a particular KV of the bulk spacetime is not associated to a single CKV $\xi$ satisfying \eqref{kideq}, but its whole conformal class $[\xi]$ of CKVs satisfying \eqref{kideq} (here it is crucial that each representative of $[\xi]$ is a solution of \eqref{kideq}.).

\subsection{The Kerr-de Sitter-like class}\label{subsecKdS}

In four spacetime dimensions (i.e. $n=3$), the vanishing of the Mars-Simon tensor for a particular KV \cite{mars99,simon84} characterizes Kerr-de Sitter and related spacetimes \cite{marsseno15}. The class of four-dimensional spacetimes which are $(\Lambda>0)$-vacuum, conformally extendable and admitting a KV $X$ whose Mars-Simon tensor vanishes defines the so-called Kerr-de Sitter-like class of spacetimes. In the locally conformally flat $\scri$ case\footnote{The non-conformally flat $\scri$ cases were also studied in \cite{KdSnullinfty}, but the results are not needed for our purposes here. } \cite{Kdslike}, the asymptotic data characterizing this class is a conformally flat Riemannian $3$-manifold $(\Sigma,\gamma)$ and a TT tensor $D$ of the form 
\begin{align}
% Y = \kappa' D_\xi,\quad\quad
D = \kappa D_\xi,\quad\mbox{where}\quad  D_\xi := \frac{1}{|\xi|_\gamma^5}\lr{\bm\xi \otimes \bm\xi - \frac{|\xi|_\gamma^2}{3}\gamma},
\end{align}
$\kappa$ is a real constant, $\xi$ is a CKV of $\gamma$ satisfying $X\mid_\scri =\xi$ (thus also \eqref{kideq}) and $\bm \xi := \gamma(\xi,\cdot)$.
% 
% and 
% \begin{equation}
%  Y_{AB} := \frac{1}{2} \volf_A{}^{CD} \hat Y_{BCD}, \quad\mbox{for}\quad \hat Y_{BCD} := \nabla_D L_{BC} - \nabla_C L_{BD} ,
% \end{equation}
% with $L$ the Schouten tensor of $\gamma$, is the Cotton-York tensor, which recall is conformally invariant in three dimensions, thus it determines the class of metrics conformal to $\gamma$. In particular, $\kappa' = 0$ if and only if $\gamma$ is locally conformally flat. 
Using the results discussed above on asymptotic characterization of $(n+1)$-dimensional spacetimes one can extend, by means of asymptotic data, the definition of the Kerr-de Sitter-like class in the conformally flat $\scri$ case to all dimensions \cite{marspeonKSKdS21}. Namely:
\begin{definition}\label{defKdSlike}
 The $(n+1)$-dimensional Kerr-de Sitter-like class with conformally flat $\scri$ is defined as the set of $(\Lambda>0)$-vacuum spacetimes which admit a conformally flat $\scri$ and such that 
\begin{align}\label{eqDKdS}
 D = \kappa D_\xi,\quad\mbox{where}\quad  D_\xi := \frac{1}{|\xi|_\gamma^{n+2}}\lr{\bm\xi \otimes \bm\xi - \frac{|\xi|_\gamma^2}{n}\gamma},
\end{align}
$\kappa$ is a real constant and $\xi$ is a CKV of the (conformally flat) metric $\gamma$ at $\scri$.
\end{definition}
We remark that this is not an ad-hoc definition, but it naturally follows after checking \cite{marspeondata21} that the asymptotic data of the Gibbons et al. definition of the  Kerr-de Sitter metrics \cite{Gibbons2005} consist of a conformally flat manifold $(\Sigma,\gamma)$ and a tensor $D$ of the form \eqref{eqDKdS}, for a particular choice of $\xi$. Allowing $\xi$ to be an arbitrary CKV keeps the traceless and transverse property of $D_{\xi}$, and thus provides a natural generalization of the data, which in turn generalizes the definition of Kerr-de Sitter-like class to higher dimensions. Moreover, in all cases $\xi$ satisfies the KID equation \eqref{kideq}.

From the data equivalence \eqref{equivdatan} it follows \cite{marspeonKSKdS21} that the data sets in the Kerr-de Sitter-like class satisfy the following equivalence property
\begin{equation}\label{equivKdS}
 (\Sigma,\gamma,\kappa D_\xi) \simeq (\Sigma,\gamma,\kappa D_{\xi'}) \iff \xi' = \varphi(\xi),\quad\mbox{for some}\quad\varphi\in\mathrm{Conf}(\Sigma,\gamma).
\end{equation}
This has more serious consequences than simply the fact that the conformal class $[\xi]$ characterizes a unique KV (cf. \eqref{equivKID}). Due to the role that $[\xi]$ plays in the construction of the data in the Kerr-de Sitter-like class, equivalence \eqref{equivKdS} actually means
that the metrics in the Kerr-de Sitter-like class (with conformally flat $\scri$) are in one-to-one correspondence with the conformal classes of CKVs of locally conformally flat metrics. Hence, the moduli space of metrics in this class is respresented by the space of parameters in Proposition \ref{defgammamu}. Moreover, the quotient topology naturally defined in the space of conformal classes is inherited by the  space of metrics in the Kerr-de Sitter-like class. The main result in \cite{marspeonKSKdS21} exploits 
this fact to provide an explicit reconstruction of all the metrics in the Kerr-de Sitter-like class as limits of Kerr-de Sitter or an analytic extension thereof. Moreover they are proven to exhaust the set of Kerr-Schild metrics on a locally de Sitter background with an additional decay condition: 

% above properties of the Kerr-de Sitter-like class are used in \cite{marspeonKSKdS21} to provide a spacetime identification and explicit reconstruction of all the metrics in the class.  

\begin{theorem}{\cite{marspeonKSKdS21}}\label{theoKS}
 Let $(\widetilde{\mathcal{M}},\widetilde g)$ be a $(\Lambda>0)$-vacuum $(n+1)$-dimensional spacetime, such that $\widetilde g$ admits the Kerr-Schild form on a locally de Sitter background, namely
 \begin{equation}\label{KSchild}
  \widetilde g = \widetilde g_{dS} + \mathcal{H} \bm k \otimes \bm k
 \end{equation}
where $\widetilde g_{dS}$ is locally isometric to de Sitter, $\mathcal{H}$ is a smooth function on $\widetilde{\mathcal{M}}$ and $\bm k$ a null one-form (w.r.t. both $\widetilde g$ and $\widetilde g_{dS}$). Additionaly, assume that $\widetilde g$ admits a smooth conformal extension such that $\Omega^2 \mathcal{H} \bm k \otimes \bm k = O(\Omega)$. Then and only then $\widetilde g$ belongs to the Kerr-de Sitter-like class with locally conformally flat $\scri$. 
\end{theorem}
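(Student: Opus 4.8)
The plan is to prove both implications by reducing everything to a comparison of asymptotic initial data, exploiting the characterization results summarized in Section \ref{seccauchy}: a $(\Lambda>0)$-vacuum spacetime with locally conformally flat $\scri$ is determined (in the relevant regularity/dimension regime) by the triple $(\Sigma,\gamma,D)$, so that membership in the Kerr-de Sitter-like class is equivalent to the data having the form \eqref{eqDKdS}. It therefore suffices to show that the Kerr-Schild form \eqref{KSchild} together with the decay condition $\Omega^2 \mathcal{H}\,\bm{k}\otimes\bm{k} = O(\Omega)$ holds if and only if $D = \kappa D_\xi$ for a conformal Killing vector $\xi$ of a conformally flat $\gamma$.

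For the forward implication I would first fix the standard conformal extension of the de Sitter background, $g_{dS} = \Omega^2\,\widetilde g_{dS}$, for which $\scri$ is locally conformally flat and the induced metric $\gamma$ is conformally flat. The decay condition then guarantees that the \emph{same} $\Omega$ conformally extends $\widetilde g$, since $g = \Omega^2 \widetilde g = g_{dS} + \Omega^2 \mathcal{H}\,\bm{k}\otimes\bm{k} = g_{dS} + O(\Omega)$ extends smoothly across $\scri$ and restricts there to $\gamma = \iota^\star g_{dS}$, which is conformally flat. Next I would compute the electric part of the rescaled Weyl tensor $D$ from \eqref{rescweyl}. Because $\widetilde g_{dS}$ is conformally flat its Weyl tensor vanishes, so the entire Weyl tensor of $\widetilde g$ is generated by the Kerr-Schild term; using the standard Kerr-Schild curvature identities (in particular that $\bm{k}$ determines a geodesic null congruence and that $\mathcal{H}$ is constrained by the vacuum equations) one computes $D$ explicitly. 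The aim is to show $D = \kappa D_\xi$ with $\kappa$ constant and $\xi$ the restriction to $\scri$ of the Kerr-Schild vector, and to verify that this $\xi$ is indeed a CKV of $\gamma$.

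For the converse I would use the explicit reconstruction of the Kerr-de Sitter-like class. By the one-to-one correspondence \eqref{equivKdS} together with the parametrization of conformal classes of CKVs in Proposition \ref{defgammamu}, every member of the class is realized by a conformal class $[\xi]$, and by the main construction of \cite{marspeonKSKdS21} each such metric is Kerr-de Sitter or an analytic continuation/limit thereof. Since the Gibbons et al. form of Kerr-de Sitter \cite{Gibbons2005} is manifestly Kerr-Schild on a de Sitter background and satisfies the required fall-off, and since the Kerr-Schild data $(\widetilde g_{dS},\mathcal{H},\bm{k})$ depend analytically on the parameters labelling $[\xi]$, both the Kerr-Schild form and the decay condition survive under these limits. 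Alternatively one can argue directly: from $\xi$ at $\scri$ one builds $\bm{k}$ by propagating $\xi$ into the bulk along the de Sitter null congruence it determines and fixes $\mathcal{H}$ so that the rescaled Weyl electric part reproduces $\kappa D_\xi$, after which checking that the resulting metric is $(\Lambda>0)$-vacuum closes the argument.

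The main obstacle is the forward computation of $D$ for a Kerr-Schild metric on the de Sitter background and the identification of its precise algebraic form as $\kappa D_\xi$. The delicate points are establishing that the proportionality factor $\kappa$ is a genuine constant on $\Sigma$ rather than a function, and proving that the restriction of the null generator to $\scri$ is a conformal Killing vector of $\gamma$; this requires combining the geodesic and shear properties of the Kerr-Schild congruence with the behaviour of $\bm{k}$ under the conformal rescaling near $\scri$, and using the vacuum field equations to control $\mathcal{H}$. Once this algebraic form is pinned down, the traceless-transverse property of $D_\xi$ and its matching with \eqref{eqDKdS} follow essentially automatically, and the data-characterization results of Section \ref{seccauchy} close both directions.
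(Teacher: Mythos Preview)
The paper does not give its own proof of this theorem: it is stated in Section~\ref{subsecKdS} with the attribution \cite{marspeonKSKdS21} in the theorem header, in a section that the authors explicitly describe as a review of results that ``are not new'' and are being summarized to make the paper self-contained. There is therefore no proof in the paper to compare your proposal against.

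That said, your outline is consistent with the surrounding narrative (computing $D$ via \eqref{rescweyl} for a Kerr--Schild metric and matching it to \eqref{eqDKdS}, together with the explicit reconstruction of the class as limits of Kerr--de Sitter). But as written it is a sketch rather than a proof: you correctly flag, without resolving, the two genuinely nontrivial steps in the forward direction --- that the null direction of $\bm{k}$ restricts at $\scri$ to a \emph{conformal Killing} vector of $\gamma$, and that the scalar in front of $D_\xi$ is a \emph{constant} $\kappa$ rather than a function on $\Sigma$. Neither follows from the Kerr--Schild algebra alone; both require the $(\Lambda>0)$-vacuum equations to constrain $\mathcal{H}$ and the optical scalars of the congruence. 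For the converse, appealing to analytic dependence of the Kerr--Schild data on the parameters and preservation under limits is the right idea, but it needs an actual argument (uniform control of the fall-off $\Omega^2\mathcal{H}\,\bm{k}\otimes\bm{k}=O(\Omega)$ along the limit family) rather than an assertion. If you want a self-contained proof you will have to carry out those computations; otherwise the honest move is to cite \cite{marspeonKSKdS21}, as the paper does.
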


% \begin{remark}
%  
% \end{remark}

     \section{An application: Classification of algebraically special
       $(\Lambda>0)$-vacuum solution in five dimensions}\label{secspecial}

     In \cite{reall15}, the problem of determining the
     most general algebraically special spacetime in five dimensions that solves
     the vacuum Einstein field equations
     \begin{align*}
       R_{\alpha\beta} = 4 \lambda g_{\alpha\beta}, \qquad \lambda \in \mathbb{R} ,
     \end{align*}
     is addressed\footnote{In \cite{Freitas16}, $\Lambda = 4 \lambda$ is used instead. We prefer $\lambda$, which matches directly
the notation used \cite{marspeondata21,marspeonKSKdS21}}. Algebraically special means that the spacetime $(M,g)$
     admits a Multiple Weyl Aligned Null Direction (WAND) $\ell$. A multiple
     WAND is a non-identically vanishing  null vector field
     satisfying \cite{Ortaggio12}
     \begin{align*}
       \ell^b \ell_{[e} C_{a]b[cd} \ell_{f]} =0
     \end{align*}
     where $C_{abcd}$ is the Weyl tensor of $(M,g)$. Multiple WANDs are always geodesic,
     $\nabla_{\ell} \ell \propto \ell$. Admitting a multiple WAND is equivalent to the algebraic classification of the Weyl tensor, as extended by Coley {\it et. al.} at $d$-dimensions
     \cite{coley04}, being of type II or more special.

     The problem in \cite{reall15} is solved under the additional hypothesis that the so-called optical matrix of $\ell$ is non-degenerate\footnote{The problem is also solved in the degenerate case in \cite{Freitas16} and references therein. We restrict to the non-degenerate case.}. The optical matrix
     $\rho$ encodes the kinematical properties of the congruence of null geodesics
     defined by $\ell$. Geometrically,  it is
     a $(0,2)$-tensor defined on the quotient vector space at each point
     $p \in M$ of  equivalence classes
     $\ell^{\perp} / \sim$, where two vectors $X,Y \in T_p M$
     orthogonal to $\ell$
     are related by $\sim$ if and only if $X-Y$ is proportional to $\ell$. Its definition is
          \begin{align*}
       \rho (\ov{X}, \ov{Y} ) :=g(X, \nabla_{Y} \ell ), \qquad
       X \in \ov{X}, Y \in \ov{Y}.
     \end{align*}
     and one checks at once that $\rho$ is well-defined, i.e. independent of the
     choice of representative $X\in \ov{X}$, $Y \in \ov{Y}$.

     Since our interest here lies in the case $\lambda >0$ we quote the results in \cite{reall15} restricted to this situation. Specifically,
     the main result in \cite{reall15} states that under the
     above conditions and $\lambda >0$
     the most general solution of the  Einstein field equations
     belongs to one of three families of metrics, classified according to the eigenvalues of $\rho$. Namely, the eigenvalues of $\rho$ can be written in terms of $r$ and $\chi$, being the former the affine parameter along the null geodesics of $\ell$ and the latter a constant function along the same congruence of geodesics. The three cases arise depending on whether $\chi$
     is not everyhwere constant or, if constant, whether this constant is zero or not. 
     \begin{itemize}
     \item[Case 1] ($\chi \neq 0, d \chi \neq 0$). The functions $\{ r,\chi\}$ are completed to local coordinates $\{r,u,\chi,x,y\}$
       and in terms of real constants $A_0, \mu_0, C_0, E_0$ such that
       \begin{align*}
         P(\chi):= C_0 - \frac{E_0^2}{\chi^2} - 2 A_0 \chi^2 - \lambda \chi^4
       \end{align*}
       is positive in some interval $I \subset \mathbb{R} \setminus \{ 0 \}$ and with $\chi$ taking values on $I$ the metric is 
       \begin{align*}
         \widetilde{g}_1 = - 2 \bm{\sigma}^1 dr
         + \frac{r^2 + \chi^2}{P(\chi)} d\chi^2
         + h_{ij} \bm{\sigma}^i \bm{\sigma}^j,
       \end{align*}
       where  the one-forms $\bm{\sigma}^i$ are
            $\bm{\sigma}^1 := du + \chi^2 dy$, $\bm{\sigma}^2 := dx - \frac{E_0}{\chi^2} dy$,
            $\bm{\sigma}^3:=dy$
                 and $h_{ij}$ is the  matrix
       \begin{align*}
         h_{ij} = \left ( \begin{array}{lll}
                            -2 H(r) & E_0 & - P(\chi) \\
                            E_0 & r^2 \chi^2 & 0 \\
                            - P(\chi) & 0 & (r^2 + \chi^2) P(\chi)
                          \end{array}
                                            \right ), \qquad
                                            H(r):= A_0 - \frac{\lambda}{2}
                                            \left ( r^2 - \chi^2 \right )
                                            - \frac{\mu_0}{2 (r^2 + \chi^2)}    .  \end{align*}
     \item[Case 2]  ($\chi \neq 0, d \chi = 0$). Here $\chi$ acts as a parameter and there exist local coordinates $\{r,u,x, \theta,\phi\}$
       and real constants $E_0, \mu_0$ such that, with $\bm{\tau} :=
       du + 
       \frac{2 \chi}{F_0} \cos \theta d \phi$, the metric reads
       \begin{align*}
         \widetilde g_2 & = -2 \bm{\tau} dr - G(r) \bm{\tau}^2 + r^2 \left ( dx + \frac{E_0}{\chi^3}
         \left ( 1+ \frac{\chi^2}{r^2} \right ) \tau  \right )^2 + \frac{r^2+ \chi^2}{F_0} \left ( d \theta^2 + \sin^2 \theta d \phi^2 \right ), \\
         G(r) & := \frac{E_0^2}{\chi^4} \left ( 1
         + \frac{\chi^2}{r^2} \right ) - \lambda \left ( r^2 + \chi^2 \right )-
         \frac{\mu_0}{r^2 + \chi^2}, \qquad F_0 :=
         4 \left ( \lambda \chi^2  + \frac{E_0^2}{\chi^4} \right ) .
       \end{align*}
     \item[Case 3]  ($\chi = 0, d \chi = 0$). The metric is
       \begin{align*}
       \widetilde  g_3 = - \left ( \kappa - \frac{\mu_0}{r^2} - \lambda r^2 \right ) du^2 - 2 du dr + r^2 h
     \end{align*}
     where $\mu_0 \in \mathbb{R}$ and $h$ is a Riemannian 3-dimensional metric of constant curvature $\kappa \in \{ -1, 0, 1\}$.
     \end{itemize}

     In all cases, the metrics admit a smooth conformal compactification. Indeed, the metric $g_i = \frac{1}{r^2} \tilde{g}_i$
       $i=1,2,3$ followed by the change of variable $\Omega=1/r$ yields a metric that is smooth in $\Omega$ and extends as a Lorentzian metric to $\scri:=\{ \Omega=0\}$.
       The corresponding metrics at $\scri$, denoted by $\gamma_i$ take the form
       \begin{align*}
       \gamma_1 &=  \frac{d\chi^2}{P(\chi)} + \lambda \left (du + \chi^2 dy \right )^2
                    + \chi^2 \left (dx - \frac{E_0}{\chi^2} dy \right )^2 + P(\chi) dy^2, \\
                    \gamma_2 &= \lambda \bm{\tau}^2 + \left ( dx + \frac{E_0}{\chi^3} \bm{\tau} \right )^2 + \frac{1}{F_0} \left ( d\theta^2 + \sin^2 \theta d
                    \phi^2 \right )
                    \\
               \gamma_3 & = \lambda du^2 + h.
       \end{align*}
                    By direct computation, one can check that the Weyl tensor of each one of these  metrics is identically zero, so $\gamma_i$
                     are locally conformally flat. As discussed in Section \ref{secKdSlike}, in the locally conformally flat $\scri$ case the rescaled Weyl tensor is smoothly extendable to $\scri$ and prescribes the TT tensor $D$ in the asymptotic initial data (cf. \eqref{rescweyl}), after a suitable idenfication via the isometric embedding $\iota: (\Sigma,\gamma ) \hookrightarrow (M,g)$ such that $\iota(\Sigma) = \scri$.        
%                      One of the main results in \cite{Carlos3} is that in the locally conformally flat case
%                           the tensor $D$ at $\scri$ can be computed as a limit
% at $\scri$ of the electric part of the Weyl tensor, suitably rescaled. More 
% specifically, letting  $\Omega$ be the conformal factor such that $g
% = \Omega^2 \tilde{g}$ extends to $\scri := \{\Omega =0\}$, the tensor $D$ is given by
% \marcnote{What is the multiplication constant A?}
% \begin{align*}
% D = A  \varphi^{\star}  (\Omega^{2-n} C( \cdot, \nu, \cdot, \nu)) 
% \end{align*}
% where $n$ is the dimension of $\scri$ (here $n=4$), 
% $\varphi$ is the embedding  of  $\scri$ into the conformal space
% $(M, g)$, $C$ is the $(0,4)$-Weyl tensor of $g$ and
% $\nu$ is a unit normal of $\scri$.

In the present case, $\scri$ is simply $\{ \Omega =0 \}$  and the embedding is trivial in these adapted coordinates, so 
% . Since $\bm{\nu} := g (\nu, \cdot) = F d\Omega$ with $F$ such that $\bm{\nu}$ is unit, 
it is a matter of direct computation to determine
the tensor $D$ via formula \eqref{rescweyl}. It turns out that in all three cases, this tensor
takes the following form
\begin{align*}
D_{AB} =  \frac{ 4 \mu_0 \lambda^2}{| \xi|_{\gamma}^6} 
\left ( \xi_{A} \xi_B - \frac{| \xi|_{\gamma}^2}{4} \gamma_{AB} \right )
\end{align*}
% where the constant $B$ takes the constant value
% \begin{align*}
%   B = - 4  \frac{\mu_0}{\lambda}
% \end{align*}
% \begin{align*}
%   B =  4 \mu_0 \lambda
% \end{align*}
 and the vector $\xi$ are given in each case by the following expressions 
\begin{align*}
  \mbox{(Case 1):} \qquad  & \xi_1 = \partial_u, \\
  \mbox{(Case 2):} \qquad  & \xi_2 =
  \partial_u - \frac{E_0}{\chi^3} \partial_x, \\
  \mbox{(Case 2):} \qquad  & \xi_3 =  \partial_u. \\
\end{align*}
Moreover it is also a matter of direct computation to check that each vector field $\xi_i$ is a conformal Killing vector of the corresponding metric $\gamma_i$. In other words:  \emph{all the  three cases belong to the Kerr-de Sitter-like class} (cf. Definition \ref{defKdSlike}). Now a natural question is to ask whether they  are a subset within or they span the whole Kerr-de Sitter-like class. To address this question we use the fact that the metrics in this class are in one-to-one correspondence (cf. \eqref{equivdatan}) with the conformal classes of the CKVs determining $D$. Hence, we must check that all possible conformal classes $\{[\xi_i]\}$ for the admissible values of the parameters in $\tilde g_i$ cover the space of conformal classes given in Proposition \ref{defgammamu}. 

{Observe that the parameter $\mu_0$ always appears as a scaling constant in $D_{AB}$. This means that it can be set to\footnote{By considering $\sign (0) = 0 $ we may include the case $\mu _0 = 0$, which corresponds to de Sitter-spacetime.} $\epsilon = \sign(\mu_0)$ by suitably absorbing its norm into $\xi$. Namely, by defining  $\xi':= |\mu_0|^{-1/2} \xi$ it follows
\begin{align}\label{escfreedom}
D_{AB} =  \frac{ 4 \mu_0 \lambda^2}{| \xi|_{\gamma}^6} 
\left ( \xi_{A} \xi_B - \frac{| \xi|_{\gamma}^2}{4} \gamma_{AB} \right ) = \frac{ 4 \epsilon \lambda^2}{| \xi'|_{\gamma}^6} 
\left ( \xi'_{A} \xi'_B - \frac{| \xi'|_{\gamma}^2}{4} \gamma_{AB} \right ).
\end{align}
This scale freedom will be relevant to prove that $\{[\xi_i]\}$ covers all possible conformal classes of four-dimensional conformally flat metrics. }

In order to determine the conformal class of $\xi_i$, we use the results of Section \ref{seccoords}. One simply needs to fix any point $p \in \scri$, compute the
quantities associated to $\xi_i$ that appear in Theorem \ref{endomorfism} and construct the endomorphism. The result, expressed in the basis $v^{\alpha}$ of
Theorem \ref{endomorfism} is
\begin{align*}
  \mbox{Case 1}: \qquad &  F_{\xi_1} = \left ( \begin{array}{cccccc}
0 & 0 & - \lambda
\left ( A_0 + \frac{\lambda}{2} \chi_0^2 \right ) & E_0 \lambda & 
\lambda \left ( A_0 \chi_0^2 - C_0 + \frac{\lambda}{2} \chi_0^4 \right ) & 0 \\
0 & 0 & 0 & 0 & \lambda \chi_0 P(\chi_0)  & 0 \\
-1 &  \lambda \chi_0^3 P(\chi_0)^{-1} & 0 & 0 & 0 & -A_0 + \frac{\lambda}{2} \chi_0^2 \\
0 & - E_0 \lambda \chi_0^{-1} P(\chi_0)^{-1} & 0 & 0 & 0 & 0 \\
0 & - \lambda \chi_0  P(\chi_0)^{-1} & 0 & 0 & 0 & - \lambda \\
0 & 0 & - \lambda & 0 & -\lambda \chi_0^2 & 0
  \end{array} \right ) ,\\
    \mbox{Case 2}: \qquad &  F_{\xi_2} = \left ( \begin{array}{cccccc}
      0 & \frac{\lambda F_0}{8}  & \frac{\lambda E_0}{\chi} & 0 & \frac{1}{4} \lambda \chi \cos \theta_0 & 0 \\
-1 & 0 & 0 & - \frac{2 \lambda \chi^2}{F_0} \mbox{cotan} \theta_0 & 0 &
\frac{1}{2} \left ( \lambda \chi^2 - \frac{E_0^2}{\chi^4} \right ) \\
\frac{E_0}{\chi^3} & 0 & 0 & 0 & 0 & \frac{E_0 F_0}{8 \chi^3} \\
0 & 0 & 0 & 0 & - \lambda \chi \sin \theta_0 & 0 \\
0 & 0 & 0 & \frac{\lambda \chi }{\sin \theta_0} & 0 & 0 \\
0 & -\lambda & 0 & 0 & - \frac{2 \lambda \chi \cos \theta_0}{F_0} & 0 
    \end{array} \right ) ,\\
\mbox{Case 3}: \qquad &  F_{\xi_3} = \left ( \begin{array}{cccccc}
  0 & - \frac{\lambda}{2} \kappa & 0 & 0 & 0 & 0 \\
-1 & 0 & 0 & 0 & 0 & - \frac{\kappa}{2} \\
0 & 0 & 0 & 0 & 0 & 0 \\
0 & 0 & 0 & 0 & 0 & 0 \\
0 & 0 & 0 & 0 & 0 & 0 \\
0 & -\lambda & 0 & 0 & 0 & 0 
\end{array} \right ) ,
\end{align*}
where the point $p$ has cordinates $\{ \chi_0,u_0,x_0,y_0\}$ in case 1 and
coordinates $\{u_0, x_0, \theta_0, \phi_0 \}$ in case 2. (It is not necessary to give values to the coordinates of $p$ in case 3 as they do not explicitly appear in the matrix.)

In order to simplify the notation we denote $\mathcal{Q}_i := \mathcal{Q}_{F_{\xi_i}}$ the polynomials \eqref{defQF2} corresponding to the three cases. They are
% The polynomial ${\mathcal P}_i $ defined by $P_i^2(z) = \mbox{det} (F^2_{\xi_i} -z \mbox{Id}_{6} )$ (see \cite{Carlos3} for details on why this quantity is interesting) is 
given (up to an irrelevant global sign) by
  \begin{align*}
    \mbox{Case 1}: \qquad &  {\mathcal Q}_1 = z^3 + 2 A_0 \lambda z^2
                            - C_0 \lambda^3 z + E_0^2 \lambda^5,\\
                            \mbox{Case 2}: \qquad &  {\mathcal Q}_2  = \left ( z + \frac{\lambda E_0^2}{\chi^4} \right ) \left ( z - \lambda^2 \chi^2 \right )^2, \\
                            \mbox{Case 3}: \qquad &  {\mathcal Q}_3  =  z^2 \left ( z + \kappa \lambda \right ).                          
\end{align*}
Observe that these polynomials are independent of the point $p$. This is a necessary fact because the conformal class of the Killing vector is independent of
the point, and hence the algebraic classification of $F_{\xi}$ up to 
conjugation must also be independent of the point. This provides a non-trivial test both to the validity of Theorem \ref{endomorfism} and for the calculations in this section.

We know from Proposition \ref{defgammamu} (cf. \cite{marspeon21}) that
the classification of $F_{\xi_i}$ is given by the set of parameters $\mathrm{Roots} ( \mathcal{Q}_i) = \{-\mu_t^2,\mu_s^2,\mu^2\}$, which are the roots of $\mathcal{Q}_1$ sorted in a way determined by the causal character of $\ker (F_{\xi_i})$. Thus, it is necessary to establish case by case the connection between the parameters defining the metrics $\widetilde g_i$ and $\mathrm{Roots} ( \mathcal{Q}_i)$.

% Since ${\mathcal Q}_1(0) = E_0^2 \lambda^5$ and ${\mathcal Q}_2(0) = E_0^2 \lambda^5$, it is immediate that cases 1 and 2 admit zero eigenvalues (thus non-trivial kernel of $F_{\xi_i}$) if and only if $E_0 = 0$. It turns therefore natural to subdivide these two cases into the $E_0\neq 0$ and $E_0 = 0$ subcases. 

\bigskip
\underline{Case 1.}
\bigskip

First observe that
\begin{equation}\label{Q1Pchi}
 \mathcal{Q}_1(\widetilde \chi^2) = -\lambda^3 \widetilde \chi^2 P( \lambda^{-1} \widetilde \chi).
\end{equation}
Hence, the roots of $\mathcal{Q}_1$ are determined by the roots of the polynomial $\chi^2 P(\chi)$. The parameters $\{A_0,C_0,E_0 \}$ defining $P(\chi)$ are restricted by the condition of $P(\chi)$ being positive in some interval $I \in \mathbb{R}\backslash\{0\}$, which is clearly equivalent to imposing this same condition on the polynomial $\chi^2 P(\chi)$. Morever, $\chi^2 P(\chi)$ is an even polynomial which has negative dominant term, i.e.  $\chi^2 P(\chi) \rightarrow \infty$ as $\chi \rightarrow \pm \infty$. If $E_0 = 0$, then $\chi^2 P(\chi)$ has a root at $\chi = 0$ which is at least double. So if $\chi^2 P(\chi)$ is to be positive on $I$, it either has one positive  root (and its corresponding negative root) if $\{0\}$ is in the closure of $I$, or two positive roots (and their corresponding negative roots) otherwise (cf. Figure \ref{fig}). If $E_0 \neq 0$, then $\chi^2 P(\chi)$ is negative at the origin, so if $\chi^2 P(\chi)$ is to be positive on $I$, it must have exactly two positive different roots (and their corresponding negative roots) (cf. Figure \ref{fig}). 

All the above cases are covered by the polynomial decomposition
\begin{equation}\label{decompol}
 \chi^2 P(\chi)= -\lambda(\chi^2 - a^2)(\chi^2-b^2)(\chi^2+c^2)
\end{equation}
where $a^2,b^2,c^2$ are real constants satisfying
\begin{equation}\label{eqabc}
a^2 +b^2 - c^2 = 2\frac{A_0}{\lambda},\quad -a^2 b^2 + a^2 c^2 + b^2 c^2 = \frac{C_0}{\lambda},\quad a^2 b^2 c^2 = \frac{E_0^2}{\lambda},
\end{equation}
with $a^2\neq b^2$ (as otherwise the positivity of $\chi^2 P(\chi)$ cannot hold). 
After swapping $a$ and $b$ if necessary we may assume $a^2 > b^2 \geq 0$. The argument can be read in the opposite direction: only by choosing any constants $a,b,c$ such that $a \neq b$ and  $a^2 > b^2 \geq 0$ and defining the parameters $\{A_0,C_0,E_0 \}$ via \eqref{eqabc}, one obtains a function $P(\chi)$ which is positive in some interval $I \in \mathbb{R}\backslash\{0\}$.  

\begin{figure}[h!]   
 \begin{center}
   {\scriptsize(i)}\includegraphics[width=0.27\textwidth]{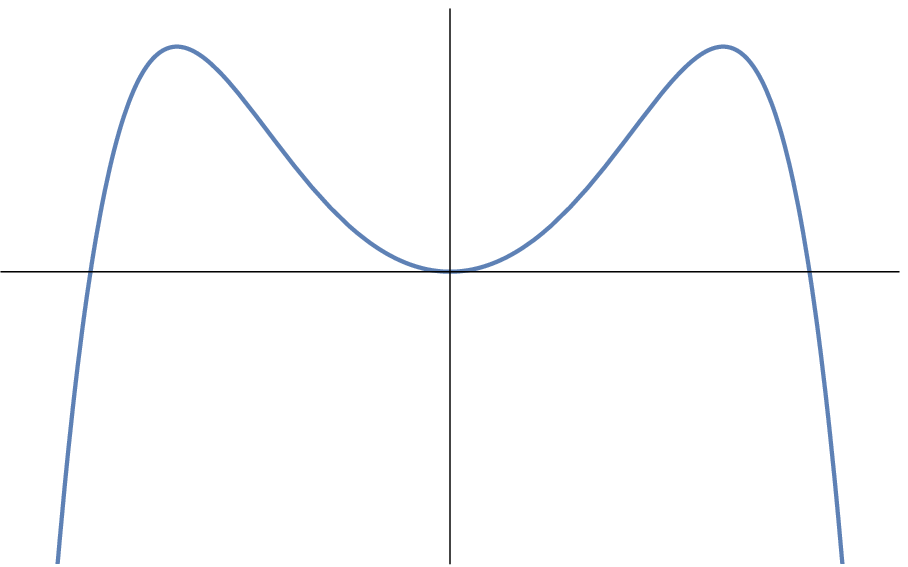}  
   \quad
  {\scriptsize(ii)}\includegraphics[width=0.27\textwidth]{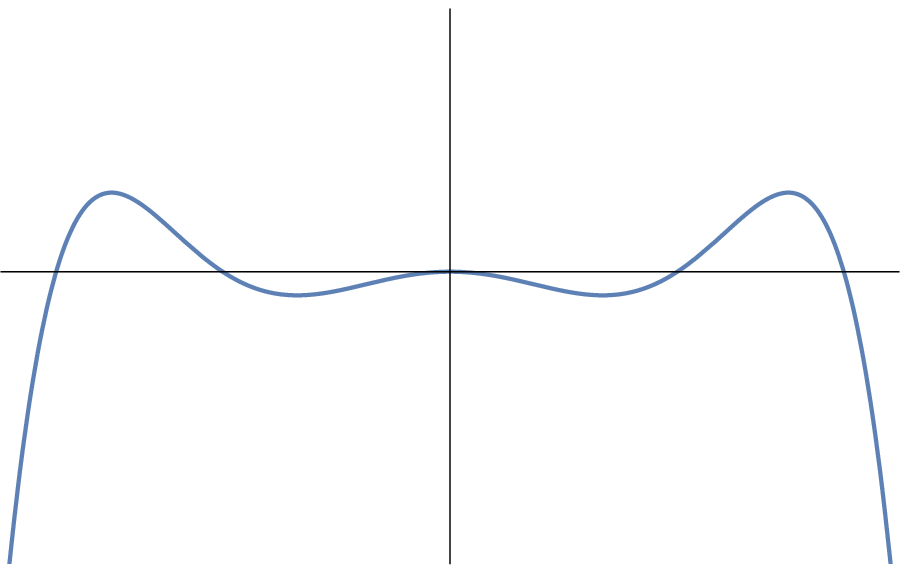}
   \quad
  {\scriptsize(iii)}\includegraphics[width=0.27\textwidth]{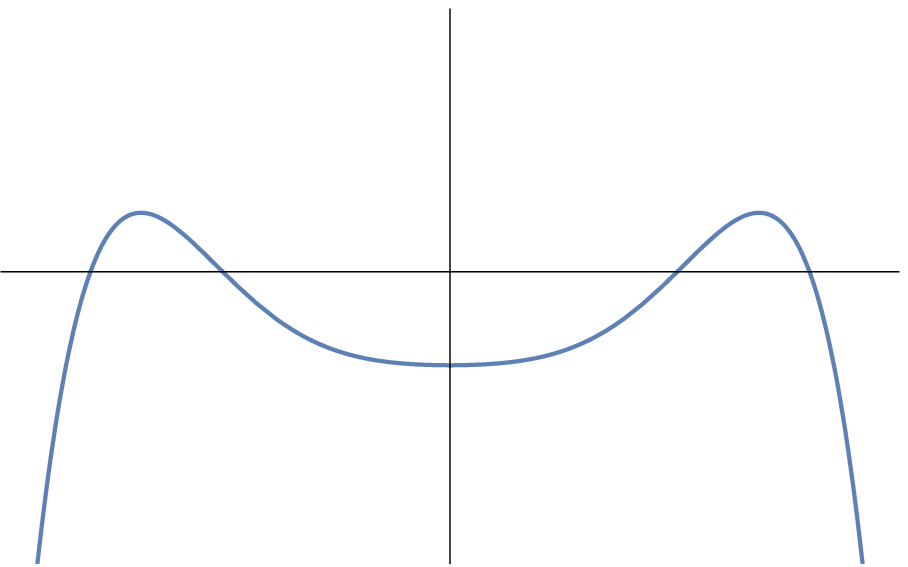}
 \caption{\small Possible profiles of the polynomial $\chi^ 2 P(\chi)$: $(i)$ $E_0 = 0$ with one real positive root, $(ii)$ $E_0 = 0$ with two real positive roots, $(iii)$ $E_0 \neq 0$.}\label{fig}     
 \end{center} 
 \end{figure}  

Therefore, by \eqref{Q1Pchi}, the roots of $\mathcal{Q}_1$ are always $\{ \lambda a^2, \lambda b^2, -\lambda c^2 \}$. According to Proposition \ref{defgammamu}, to determine their correspondence with $\{-\mu_t^2,\mu_s^2,\mu^2\}$ we must sort $\{ \lambda a^2, \lambda b^2, -\lambda c^2 \}$ depending on the causal character of $\ker F_{\xi_1}$. By Proposition \ref{defgammamu}, the sorting is such that  $-\mu_t^2 = \mu_s^2 =0$ and $\mu^2 = \lambda a^2$ whenever 
$\ker(F_{\xi_1})$ is null  (in which case, by Remark \ref{remarknullker}, $\mathcal Q_1$ has an at least double root at zero and no negative roots) and 
$-\mu_t^2 = -\lambda c^2, \mu_s^2 = \lambda a^2, \mu^2 = \lambda b^2$ when $\ker (F_{\xi_1})$ is non-null.

% Namely,  as we will next check, if $\ker F_{\xi_1}$ is null $\mathcal{Q}_1$ has two roots at zero and no negative roots (cf. Remark \ref{remarknullker}), which implies $b = c = 0$ and establishes, by Proposition \ref{defgammamu}, $-\mu_t^2 = \mu_s^2 = 0$ and $\mu^2 = \lambda a^2$. Otherwise, Proposition \ref{defgammamu} establishes $-\mu_t^2 = -\lambda c^2$, $-\mu_s^2 = \lambda a^2 $, $\mu^2 = \lambda b^2 $.
% so in order to relate them to  \car{We remark that although the roots of $\mathcal{Q}_1$ are always $\{ \lambda a^2, \lambda b^2, -\lambda c^2 \}$, their correspondence with $ -\mu_t^2$, $\mu_s^2$ or $ \mu^2$ depends on the causal character of $\ker(F_\xi)$.}

{If $E_0 \neq 0$, then zero is not a root of $\mathcal{Q}_1$. Being $\mathcal{Q}_1$ the square root of the charateristic polynomial of $F^2$ (see \eqref{defQF2}), this implies that $F_{\xi_1}$ has no zero eigenvalues.} Thus $\ker F_{\xi_1} = \{ 0 \}$ and by Proposition \ref{defgammamu}, the conformal class of $\xi_1$ is given by: 
\begin{itemize}
 \item[1.a)] $-\mu_t^2 = -\lambda c^2 $, $\mu_s^2 = \lambda a^2 $ and $\mu = \lambda b^2 $ covering the region $\{\mu_s^2 > \mu^2 > 0 > -\mu_t^2 \}$. 
\end{itemize}

When $E_0 = 0 $ it is straightforward to compute the kernel of $F_{\xi_1}$. The result
is
\begin{align*}
  \mbox{ker} (F_{\xi_1} ) = \mbox{span} \left \{ u^1 := v^4, u^2 := v^1 +
  v^2 + \frac{1}{\chi_0} v^3 - ( A_0 + \frac{\lambda}{2} \chi_0^2  ) v^6 \right \} ,
\end{align*}
so the kernel is two-dimensional in this case. The pull-back of the scalar product $Q$ on this space is
\begin{align*}
  Q(u^1,u^1)= \frac{1}{\chi_0^2}, \qquad
  Q(u^1, u^2 )=0, \qquad Q(u^2, u^2 ) = \frac{C_0}{\chi_0^2}.
  \end{align*}
This is timelike if $C_0 <0 $, null if $C_0=0$  and spacelike if $C_0>0$. By \eqref{eqabc}, $E_0  = 0$ requires the vanishing of $b^2$ and/or $c^2$,  and by Proposition \ref{defgammamu}, the conformal class of $\xi_1$ is given by
\begin{itemize}
 \item[1.b)] If $c^2 = 0, b^ 2 \neq 0$, then $C_0 < 0$ and $-\mu_t^2 = 0 $, $\mu_s^2 = \lambda a^2 $, $\mu = \lambda b^2 $ cover the region $\{\mu_s^2 > \mu^2 > 0 = -\mu_t^2 \}$.
  \item[1.c)] If $c^2 \neq 0, b^2 = 0$, then $C_0 > 0$ and $-\mu_t^2 = -\lambda c^2 $, $\mu_s^2 = \lambda a^2 $, $\mu = 0 $ cover the region $\{\mu_s^2 > \mu^2 = 0 > -\mu_t^2 \}$.
   \item[1.d)] If $c^2 = 0, b^ 2 = 0$, then $C_0 = 0$ and $-\mu_t^2 = 0 $, $\mu_s^2 = 0 $, $\mu = \lambda a^2 $ cover the region $\{ \mu^2 > 0 = \mu_s^2 = -\mu_t^2 \}$.
\end{itemize}

% \car{Observe that case 1.d) corresponds with $\ker (F_{\xi_1})$ null. It can be proven \cite{marspeon21} that in this case there }

\bigskip
\underline{Case 2.}
\bigskip

In this case the roots of $\mathcal{Q}_2$ are immediately found to be $-{\lambda E_0^2}/{\chi^4}$ together with the double root $ \lambda^2 \chi^2$. If $E_0 \neq 0$  it again follows  that $\ker F_{\xi_2} = \{ 0 \}$, thus the conformal class of $\xi_2$ is given by the parameters $-\mu_t^2 = -{\lambda E_0^2}/{\chi^4} $, $\mu_s^2 = \mu = \lambda^2 \chi^2 $ covering the region $\{\mu_s^2 = \mu^2 > 0 > -\mu_t^2 \}$.  When $E_0 = 0$ the kernel of
$F_{\xi_2}$ is 
\begin{align*}
  \mbox{ker} (F_{\xi_2} ) = \mbox{span} \left \{ u^1 := v^3, u^2 := v^1 +
  \frac{\lambda \chi^2}{2} v^6 \right \}
\end{align*}
so the kernel is again two-dimensional and $Q$ restricted to 
this space is
\begin{align*}
  Q(u^1,u^1)= 1, \qquad
  Q(u^1, u^2 )=0, \qquad Q(u^2, u^2 ) = -\frac{4}{\lambda \chi^2} .
  \end{align*}
Thus, $\ker (F_{\xi_2} )$ is timelike and the conformal class of $\xi_2$ is given by 
\begin{itemize}
 \item[2)] $-\mu_t^2 = 0 $, $\mu_s^2 = \mu = \lambda^2 \chi^2 $ covering the region $\{\mu_s^2 = \mu^2 > 0 = -\mu_t^2 \}$.
\end{itemize}

\bigskip
\underline{Case 3.}
\bigskip

In this case the roots of $\mathcal{Q}_3$ are also trivial.  There is a double root at zero and another one at $-\lambda \kappa$.  Note that from the scaling freedom of $\xi$ (cf. \eqref{escfreedom}), $F_{\xi_3}$ is also defined up to a scaling factor $|\mu_0|^{1/2}$ and $F^2_{\xi_3}$ up to $|\mu_0|$. Then the root $-\lambda \kappa$ can be scaled by a non-zero positive factor $-\lambda |\mu_0| \kappa$, which will be relevant to cover the maximal space of parameters in the space of conformal classes. {Observe that (in any of the three cases) there is no restriction on the value of $\mu_0$.} 

The kernel of the endomorphism is
\begin{align*}
  \mbox{ker} (F_{\xi_3} ) = \mbox{span} \left \{ u^1 := v^1 - \frac{\kappa}{2} v^6, u^2 := v^3, u^3 := v^4, u^4:=v^5   \right \}
\end{align*}
hence four-dimensional and $Q$ restricted to 
this space is
\begin{align*}
  Q(u^1,u^1)= \kappa, \qquad
  Q(u^2, u^2 ) = 1, \qquad
Q(u^3, u^3 ) = \frac{1}{\Sigma^2|_p}, \qquad
Q(u^4, u^4 ) = \frac{1}{(\Sigma^2 \sin^2 \theta) |_p}
\end{align*}
and the rest zero, and 
where we have written the constant curvature metric as
\begin{align*}
  h = d \psi^2 + \Sigma^2 (\psi) \left ( d \theta^2 + \sin^2
  \theta d \phi^2 \right ).
\end{align*}
The kernel is now spacelike if $\kappa >0$, degenerate if $\kappa=0$
 and timelike if $\kappa<0$. Thus, for each case, the conformal class of $\xi_3$ is
\begin{itemize}
 \item[3.a)] For $\kappa = 1$, $-\mu_t^2 = -\lambda  |\mu_0|$, $\mu_s^2 = \mu = 0$ covering the region $\{\mu_s^2 = \mu^2 = 0 > -\mu_t^2 \}$.
  \item[3.b)] For $\kappa = 0$, $-\mu_t^2  = \mu_s^2 = \mu = 0$ covering a single point. 
   \item[3.c)] For $\kappa = -1$, $\mu_s^2 =\lambda  |\mu_0|$, $-\mu_t^2 =  \mu = 0$ covering the region $\{\mu_s^2 > \mu^2 = 0 = -\mu_t^2 \}$.
\end{itemize}

\bigskip
 \bigskip
 
 Summarizing, the cases 1,2 and 3 correspond to  CKVs $\xi_1, \xi_2$ and $\xi_3$ whose respective conformal classes cover the space of parameters 
 \begin{align}
  \mathcal{A}  =  & \{ (-\mu_t^2,\mu_s^2,\mu^2) \in \mathbb{R}^3 \mid \mu_s^2 \geq \mu^2 \geq 0 \geq -\mu_t^2, ~\mbox{with}~  \mu_t^2 ~\mbox{or} ~ \mu_s^2 \neq 0\} \\
  \bigcup & \{ (-\mu_t^2,\mu_s^2,\mu^2) \in \mathbb{R}^3 \mid  \mu^2 \geq 0 = \mu_s^2 = -\mu_t^2\}.
 \end{align}
By Proposition \eqref{defgammamu}, $\mathcal{A}$ corresponds to the entire space parameterizing the conformal classes of CKVs of conformally flat $4$-metrics. Thus, we have proven that these metrics correspond exactly to all metrics in the five dimensional Kerr-de Sitter-like class. This, combined with Theorem \ref{theoKS}, yields the following result:

\begin{theorem}
 In five spacetime dimensions, the following classes of $(\Lambda>0)$-vacuum metrics are equivalent:
 \begin{enumerate}
  \item  The Kerr-de Sitter-like class (cf. Definition \ref{defKdSlike}).
  \item The Kerr-Schild type metrics on a locally de Sitter background (cf. \eqref{KSchild}) admitting a smooth conformal extension such that $\Omega^2 \mathcal{H} \bm k \otimes \bm k = O(\Omega)$.
  \item The algebraically special metrics with non-degenerate optical matrix. 
 \end{enumerate}

\end{theorem}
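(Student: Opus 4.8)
The equivalence of the three classes will follow from a short chain of implications built on the results already assembled in this section and on the earlier cited theorem. The plan is to treat (1)$\Leftrightarrow$(2) separately from (1)$\Leftrightarrow$(3): the former is essentially a citation, whereas the latter is where the classification machinery of Section~\ref{seccoords} carries the weight. Establishing these two equivalences suffices, since transitivity then yields (2)$\Leftrightarrow$(3).

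First I would dispose of (1)$\Leftrightarrow$(2) by invoking Theorem~\ref{theoKS} directly: it states that a $(\Lambda>0)$-vacuum $(n+1)$-dimensional metric admits the Kerr-Schild form \eqref{KSchild} on a locally de Sitter background with the fall-off $\Omega^2\mathcal{H}\,\bm k\otimes\bm k=O(\Omega)$ if and only if it belongs to the Kerr-de Sitter-like class with locally conformally flat $\scri$. Setting $n=4$ gives (1)$\Leftrightarrow$(2) with no further work.

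The heart of the matter is (1)$\Leftrightarrow$(3), which I would prove as a double inclusion of classes of metrics. For (3)$\subseteq$(1) the route is to compute the asymptotic data of each Reall family $\widetilde g_1,\widetilde g_2,\widetilde g_3$: the rescaling $\Omega=1/r$ yields smooth boundary metrics $\gamma_i$ whose Weyl tensor vanishes, so each $\scri$ is locally conformally flat, and then evaluating the rescaled Weyl tensor through \eqref{rescweyl} produces a TT tensor of exactly the form $D=\kappa D_{\xi_i}$ for a CKV $\xi_i$ of $\gamma_i$; by Definition~\ref{defKdSlike} this places every such metric in the Kerr-de Sitter-like class. For the reverse inclusion (1)$\subseteq$(3) I would use the bijection \eqref{equivKdS}: metrics in the Kerr-de Sitter-like class are in one-to-one correspondence with conformal classes $[\xi]$ of CKVs of locally conformally flat $4$-metrics, and by Proposition~\ref{defgammamu} these classes are labelled by the invariants $\{-\mu_t^2,\mu_s^2,\mu^2\}$. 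It then suffices to show that, as the parameters of $\widetilde g_1,\widetilde g_2,\widetilde g_3$ range over their admissible values, the conformal classes $[\xi_i]$ exhaust the whole parameter region $\mathcal{A}$.

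This coverage statement is the main obstacle. Here I would apply Theorem~\ref{theoendocords} to each $\xi_i$ --- and it is essential to use its coordinate-independent form, since the Reall coordinates are adapted to spacetime null congruences and are a priori unrelated to Cartesian coordinates at $\scri$ --- to obtain the explicit endomorphisms $F_{\xi_i}$ and the polynomials $\mathcal{Q}_i$ of \eqref{defQF2}. The delicate points are three: (a) relating the roots of $\mathcal{Q}_1$ to the positivity of $P(\chi)$ via the factorisation \eqref{decompol}, the parameter identities \eqref{eqabc}, and the profiles of Figure~\ref{fig}; (b) computing the causal character of $\ker F_{\xi_i}$ in each degenerate sub-case, because by Proposition~\ref{defgammamu} and Remark~\ref{remarknullker} it is precisely this datum that fixes the ordering of the roots and hence pins down the conformal class; and (c) exploiting the scaling freedom $\xi\mapsto|\mu_0|^{-1/2}\xi$ of \eqref{escfreedom} to reach the boundary and degenerate strata of $\mathcal{A}$, notably to rescale the root $-\lambda\kappa$ in Case~3. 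Collecting the sub-cases 1.a--1.d, 2, and 3.a--3.c then shows $\bigcup_i[\xi_i]=\mathcal{A}$, so every conformal class is realised; by the bijection each Kerr-de Sitter-like metric coincides with one of the three Reall families, giving (1)$\subseteq$(3). Combined with (3)$\subseteq$(1) this yields (1)$\Leftrightarrow$(3), and together with (1)$\Leftrightarrow$(2) the theorem follows.
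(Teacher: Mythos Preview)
Your proposal is correct and mirrors precisely the paper's own argument: the equivalence (1)$\Leftrightarrow$(2) is a direct citation of Theorem~\ref{theoKS}, while (1)$\Leftrightarrow$(3) is established by computing the asymptotic data of the three Reall families to get (3)$\subseteq$(1), and then applying Theorem~\ref{theoendocords} to the CKVs $\xi_i$ to verify that the resulting conformal classes exhaust the parameter space $\mathcal{A}$ of Proposition~\ref{defgammamu}. The three delicate points you single out---the factorisation \eqref{decompol} with the positivity constraint on $P(\chi)$, the causal character of $\ker F_{\xi_i}$ in the degenerate cases, and the scaling freedom \eqref{escfreedom}---are exactly the technical ingredients the paper uses.
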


\section{Discussion} \label{secdiscus}

We have obtained a method to determine the conformal class of an arbitrary CKV $\xi$ of a locally conformally flat metric $\gamma$ of any dimension and signature. Such method is based on pointwise properties of the CKVs and it is independent on the coordinates and the representative of the class of metrics conformal to $\gamma$. This improves previously existing results (cf. \cite{marspeondata21,marspeonKSKdS21}) which require to find an explicitly flat representative in Cartesian coordinates. 

Our result is stated as a computationally neat algorithm in Theorem \ref{theoendocords}, which allows for a straightforward application in Section \ref{secspecial}. Namely, we classify the asymptotic data of all five-dimensional, algebraically special, $(\Lambda>0)$-vacuum spacetimes, whose optical matrix is non-degenerate (cf. \cite{reall15}). Such asymptotic data are determined by the conformal class of a CKV of a conformally flat $\scri$. Furthermore, we prove equivalence of this collection of spacetimes with the Kerr-de Sitter-like class as well as with the $(\Lambda>0)$-vacuum  Kerr-Schild spacetimes satisfying a natural asymptotic condition (cf. Theorem \ref{theoKS}).

It is worth commenting that the results of Section \ref{secspecial}, besides providing an application of Theorem \ref{theoendocords}, outline the way for potential future results.  As pointed out in \cite{reall15}, in dimensions higher than four, there exist several results (cf. \cite{diasreall13,ort12}) supporting the idea that the class of algebraically special solutions is more rigid than in four dimensions.  One then wonders whether Theorem \ref{theoKS} also holds in any dimension higher than five. 
Surprisingly, the results in \cite{reall15} do not rely on any asymptotic property of the spacetime, while in \cite{marspeonKSKdS21} it is central. Yet, the class of spacetimes studied \cite{reall15} and in \cite{marspeonKSKdS21} (in five spacetime dimensions) happen to be equivalent, as we have shown in this paper. This hints a possible connection between asymptotic properties of spacetimes and the algebraic classification of the Weyl tensor. More precisely, the algebraically special condition with non-degenerate optical matrix implies conformal extendability with locally conformally flat $\scri$. 
A better understanding of this aspect would be of substantial intrinsic interest and key for extending Theorem \ref{theoKS} to arbitrary dimensions.

% \car{It should be remarked that from the argumets in \cite{marspeonKSKdS21}, it follows that the Kerr-de Sitter-like class of spacetimes is algebraically special.  However, in order to discuss the possibility of the converse result, we notice that the asymptotic characterization of $(\Lambda>0)$-vacuum spacetimes in terms of asymptotic Cauchy data is general in all even spacetime dimensions \cite{Anderson2005,andersonchrusciel05,Kaminski21}, while in odd spacetime dimensions it is only proven under the asumption of analyticity of the data  \cite{kichenassamy03}. The latter restriction limits the asymptotic analysis of odd dimensional spacetimes without any additional condition. This raises the question of whether the algebraically special condition, if connected to the asymptotic properties of the spacetime, may also be related to the analyticity of the asymptotic data. Otherwise, in order to extend the Theorem \ref{theoKS} to odd dimensions higher than five, one needs to study the well-posedness of the asymptotic Cauchy problem in odd spacetime dimensions. Observe that we can skip this step in five dimensions because the algebraically special metrics with non-degenerate optical matrix are explicit, so conformal extendability and analyticity of the data can be proven by direct calculation.}\carnote{Este parrafo azul quizas sea mejor quitarlo para acortar esta seccion?}

It is also interesting to observe that the Weyl tensor $C$ of the spacetime determines, in any dimension, the Weyl tensor\footnote{\car{Note that $c$ and $D$ as defined in \eqref{rescweyl}, are indeed independent objects at $\scri$, however both obtainable from the spacetime Weyl tensor $C$ at $\scri$.}} $c$ of the  metric $\gamma$ induced at $\scri$. Indeed, an asymptotic expansion of $C$, shows that its components fully tangent to $\scri$ coincide with $c$ to the leading order. The Weyl tensor contains a lot of information about the conformal class of metrics of dimension equal or higher than four\footnote{The conformal class determines the Weyl tensor, but the opposite is not always true, cf. \cite{hall09}.}, so it would not be surprising that the algebraically special condition on $C$ imposes strong conditions on the conformal class of (four or higher dimensional) $\gamma$. However, in four spacetime dimensions, $\gamma$ is three dimensional and thus $c$ vanishes identically, so it is not clear in this case how the algebraic type of $C$ affects the conformal class of $\gamma$. The difference between the four and higher dimensional cases may be responsible for the lack of rigidity of algebraically special metrics in four dimensions, because recall (cf. Section \ref{secKdSlike}) that the conformal class of $\gamma$ is one of the freely speciable data in the asymptotic Cauchy problem. This, however, does not rule out other possible relations between the algebraic type of the four-dimensional spacetime metrics and their asymptotic properties. One connection may arise from the fact that the other component of the asymptotic data is, in four dimensions, the electric part of the rescaled Weyl tensor (cf.  Section \ref{secKdSlike}). In addition, constraints on the conformal class of $\gamma$ may also appear as a consequence of the relation between $C$ and Cotton tensor of $\gamma$, which plays a similar role than the Weyl tensor for three dimensional metrics. 
 These potential connections are worth to investigate in the future.

  \section{Acknowledgments}
  The authors wish to thank Igor Khavkine for comments on the manuscript and for providing
useful references. MM acknowledges financial support under the projects PGC2018-096038-B-I00 (Spanish Ministerio de Ciencia, Innovaci\'on y Universidades and FEDER), SA096P20 (Junta de Castilla y Le\'on) and CPN is supported by the grant No 22-14791S of the Czech Science Foundation.

% \bibliography{/home/carlos/MEGA/articulos_LaTex/bibtex/bibliografia.bib}
% \bibliography{bibliografia.bib}

\begin{thebibliography}{10}
  
\bibitem{Anderson2005}
M.~T. Anderson.
\newblock Existence and stability of even-dimensional asymptotically de
  \mbox{Sitter} spaces.
\newblock {\em Annales Henri Poincar{\'e}}, 6:801--820, 2005.

\bibitem{andersonchrusciel05}
M.~T. Anderson and P.~T. Chruściel.
\newblock Asymptotically simple solutions of the vacuum \mbox{Einstein}
  equations in even dimensions.
\newblock {\em Communications in Mathematical Physics}, 260:557–577, 2005.

\bibitem{BeigChruscielKID97}
R.~Beig and P.~T. Chru\'sciel.
\newblock Killing initial data.
\newblock {\em Classical and Quantum Gravity}, 14:83–92, 1997.


\bibitem{reall15}
G.~Bernardi~de Freitas, M.~Godazgar, and H.~S. Reall.
\newblock {Uniqueness of the \mbox{Kerr-de Sitter} spacetime as an
  algebraically special solution in five dimensions}.
\newblock {\em Communications in Mathematical Physics}, 340:291--323, 2015.

\bibitem{Freitas16}
G.~Bernardi~de Freitas, M.~Godazgar, and H.~S. Reall.
\newblock Twisting algebraically special solutions in five dimensions.
\newblock {\em Classical and Quantum Gravity}, 33:095002, 2016.

\bibitem{blairconformal}
D.~E. Blair.
\newblock {\em Inversion Theory and Conformal Mapping}.
\newblock Student Mathematical Library. American Mathematical Society,
  {Providence, Rhode Island}, 2000.

  \bibitem{Chenteo11}
Y. Chen, and E. Teo.
\newblock A new AF gravitational instanton.
\newblock {\em Physics Letters B}, 703: 359-362, 2011.

  
\bibitem{coley04}
A.~Coley, R.~Milson, V.~Pravda, and A.~Pravdová.
\newblock Classification of the \mbox{Weyl} tensor in higher dimensions.
\newblock {\em Classical and Quantum Gravity}, 21:35–41, 2004.

\bibitem{adjointorbs}
D. H. Collingwood and W. L. McGovern.
\newblock {\em Nilpotent orbits in semisimple Lie algebras}.
\newblock Van Nostrand Reinhold, New York 1993.

\bibitem{Der12}
A.~Derdzinski.
\newblock Two-jets of conformal fields along their zero sets.
\newblock {\em Central European Journal of Mathematics}, 10:1698--1709, 2012.

\bibitem{diasreall13}
O.~J.~C. Dias and H.~S. Reall.
\newblock Algebraically special perturbations of the Schwarzschild solution in
  higher dimensions.
\newblock {\em Classical and Quantum Gravity}, 30:095003, 2013.

\bibitem{FeffGrah85}
C.~Fefferman and C.~R. Graham.
\newblock Conformal invariants.
\newblock {\em \'Elie Cartan et les math\'ematiques d'aujourd'hui}, 
  S131: 95--116. Soci\'et\'e math\'ematique de France, 1985.

\bibitem{ambientmetric}
C.~Fefferman and C.~R. Graham.
\newblock {\em The Ambient Metric}, {\em Annals of Mathematics
  Studies}, 178.
\newblock Princeton University Press, 2012.

\bibitem{friedrich81bis}
H.~Friedrich.
\newblock {On the regular and asymptotic characteristic initial value problem
  for Einstein's vacuum field equations}.
\newblock {\em Proceedings of the Royal Society of London A}, 375:169--184,
  1981.

\bibitem{friedrich81}
H.~Friedrich.
\newblock {The asymptotic characteristic initial value problem for Einstein's
  vacuum field equations as an initial value problem for a first-order
  quasilinear symmetric hyperbolic system}.
\newblock {\em Proceedings of the Royal Society of London A}, 378:401--421,
  1981.

\bibitem{Fried86geodcomp}
H.~Friedrich.
\newblock On the existence of n-geodesically complete or future complete
  solutions of einstein's field equations with smooth asymptotic structure.
\newblock {\em Communications in Mathematical Physics}, 107:587--609, 1986.

\bibitem{Gibbons2005}
G.W. Gibbons, H.~Lü, D.~N. Page, and C.N. Pope.
\newblock The general \mbox{Kerr–de Sitter} metrics in all dimensions.
\newblock {\em Journal of Geometry and Physics}, 53:49 -- 73, 2005.

\bibitem{hall09}
G.~Hall.
\newblock Some remarks on the converse of {Weyl's} conformal theorem.
\newblock {\em Journal of Geometry and Physics}, 60:1--7, 2010.

\bibitem{Holl05}
S.~Hollands, A.~Ishibashi, and D.~Marolf.
\newblock Comparison between various notions of conserved charges in
  asymptotically \mbox{AdS} spacetimes.
\newblock {\em Classical and Quantum Gravity}, 22:2881–2920, 2005.

\bibitem{Kaminski21}
W.~Kamiński.
\newblock Well-posedness of the ambient metric equations and stability of even
  dimensional asymptotically de \mbox{Sitter} spacetimes, 2021.  	ArXiv:2108.08085.

\bibitem{kichenassamy03}
S.~Kichenassamy.
\newblock On a conjecture of \mbox{Fefferman and Graham}.
\newblock {\em Advances in Mathematics}, 184:268--288, 2004.

  
\bibitem{knapp}
A.W. Knapp.
\newblock {\em Lie Groups Beyond an Introduction}.
\newblock Progress in Mathematics. Birkhäuser Boston, 2002.

\bibitem{mars99}
M.~Mars.
\newblock A spacetime characterization of the \mbox{Kerr} metric.
\newblock {\em Classical and Quantum Gravity}, 16:2507–2523, 1999.

\bibitem{Kdslike}
M.~Mars, T.~T. Paetz, and J.~M.~M. Senovilla.
\newblock {Classification of Kerr–de Sitter-like spacetimes with conformally
  flat {$\mathscr{I}$}}.
\newblock {\em Classical and Quantum Gravity}, 34:095010, 2017.

\bibitem{KdSnullinfty}
M.~Mars, T.T. Paetz, J.~M.~M. Senovilla, and W.~Simon.
\newblock Characterization of (asymptotically) \mbox{Kerr–de Sitter-like}
  spacetimes at null infinity.
\newblock {\em Classical and Quantum Gravity}, 33:155001, 2016.

\bibitem{marspeon20}
M.~Mars and C.~Pe\'on-Nieto.
\newblock Skew-symmetric endomorphisms in $\mathbb{M}^{1,3}$: a unified
  canonical form with applications to conformal geometry.
\newblock {\em Classical and Quantum Gravity}, 38:035005, 2020.

\bibitem{marspeondata21}
M.~Mars and C.~Peón-Nieto.
\newblock Free data at spacelike $\mathscr{I}$ and characterization of {Kerr-de
  Sitter} in all dimensions.
\newblock {\em The European Physical Journal C}, 81, 2021.

\bibitem{marspeon21}
M.~Mars and C.~Peón-Nieto.
\newblock Skew-symmetric endomorphisms in $\mathbb{M}^{1,n}$: a unified
  canonical form with applications to conformal geometry.
\newblock {\em Classical and Quantum Gravity}, 38:125009, 2021.

\bibitem{marspeonKSKdS21}
M.~Mars and C.~Peón-Nieto.
\newblock Classification of {Kerr-de Sitter-like} spacetimes with conformally
  flat $\mathscr{I}$ in all dimensions.
\newblock {\em Physical Review D}, 105:044027, 2022.

\bibitem{marsseno15}
M.~Mars and J.~M.~M. Senovilla.
\newblock {A spacetime characterization of the Kerr-NUT-(A)de Sitter and
  related metrics}.
\newblock {\em Annales Henri Poincaré}, 16:1509--1550, 2015.

\bibitem{ort12}
M.~Ortaggio, V.~Pravda, A.~Pravdov{\'{a}}, and H.~S. Reall.
\newblock On a five-dimensional version of the {Goldberg{\textendash}Sachs}
  theorem.
\newblock {\em Classical and Quantum Gravity}, 29:205002, 2012.

\bibitem{Ortaggio12}
M.~Ortaggio, V.~Pravda, and A.~Pravdová.
\newblock Algebraic classification of higher dimensional spacetimes based on
  null alignment.
\newblock {\em Classical and Quantum Gravity}, 30:013001, 2012.

\bibitem{KIDPaetz}
T.~T. Paetz.
\newblock {Killing Initial Data} on spacelike conformal boundaries.
\newblock {\em Journal of Geometry and Physics}, 106:51 -- 69, 2016.

\bibitem{IntroCFTschBook}
M.~Schottenloher.
\newblock {\em A Mathematical Introduction to Conformal Field Theory},
    {\em Lecture Notes in Physics}, 43.
\newblock Springer, Berlin-Heidelberg, 2008.

\bibitem{simon84}
W.~{Simon}.
\newblock {Characterizations of the Kerr metric.}
\newblock {\em General Relativity and Gravitation}, 16:465--476, 1984.

\bibitem{kroonbook}
J.~A. Valiente-Kroon.
\newblock {\em {Conformal methods in general relativity}}.
\newblock Cambridge University Press, Cambridge, 2016.

\end{thebibliography}
% \bibliographystyle{plain}

\end{document}